\documentclass[10pt, doublecolumn]{IEEEtran}
\usepackage{epsfig,latexsym}
\usepackage{float}
\usepackage{indentfirst}
\usepackage{amsmath}
\usepackage{bm}
\usepackage{amssymb}
\usepackage{times}

\usepackage{algorithm}
\usepackage[noend]{algpseudocode} 
\usepackage{subfigure}
\usepackage{psfrag}
\usepackage{hyperref}
\usepackage{cite}
\usepackage{lastpage}
\usepackage{fancyhdr}
\usepackage{color}
 \usepackage{amsthm}
\usepackage{bigints}
\usepackage{booktabs}
\usepackage{multirow}
\usepackage{siunitx}

\sloppy

\newtheorem{Lemma}{Lemma}
\newtheorem{lemma}[Lemma]{$\mathbf{Lemma}$}

\newcounter{problem}
\newcounter{save@equation}
\newcounter{save@problem}
\makeatletter
\newenvironment{problem}
{\setcounter{problem}{\value{save@problem}}%
  \setcounter{save@equation}{\value{equation}}%
  \let\c@equation\c@problem
  \subequations
}
{\endsubequations
  \setcounter{save@problem}{\value{equation}}%
  \setcounter{equation}{\value{save@equation}}%
}

\begin{document}
\title{  \huge{BackCom Assisted Hybrid NOMA Uplink Transmission for Ambient IoT  }}

\author{\author{ Zhiguo Ding, \IEEEmembership{Fellow, IEEE}  \thanks{ 
  
\vspace{-2em}

  Z. Ding is with  Khalifa University, Abu Dhabi, UAE, and  University of Manchester, Manchester, M1 9BB, UK.      }
 
 \vspace{-2em}

  }\vspace{-4em}}
 \maketitle

\vspace{-1em}
\begin{abstract}
Hybrid non-orthogonal multiple access (H-NOMA)  has recently received significant attention as a general framework of multiple access, where both conventional orthogonal multiple access (OMA) and pure NOMA are its special cases. This paper focuses on the application of H-NOMA to ambient Internet of Things (IoT) with energy-constrained devices, where a new backscatter communication (BackCom) assisted H-NOMA uplink scheme is developed.  Resource allocation for H-NOMA uplink transmission is also considered, where an overall power minimization problem is formulated. Insightful understandings for the key features of BackCom assisted H-NOMA and its difference from conventional H-NOMA are illustrated by developing analytical results for the two-user special case. For the general multi-user scenario, two algorithms, one based on the branch-bound (BB) principle and the other based on successive convex approximation (SCA), are developed to realize different tradeoffs between the system performance and complexity. The numerical results are also provided to verify the accuracy of the developed analytical results and demonstrate the performance gain of H-NOMA over OMA.  
\end{abstract}\vspace{-0.5em}

\begin{IEEEkeywords}
Hybrid non-orthogonal multiple access (NOMA), backscatter communications (BackCom), orthogonal multiple access (OMA), resource allocation, Ambient IoT
\end{IEEEkeywords}
\vspace{-1em} 

\section{Introduction}
Non-orthogonal multiple access (NOMA) has been recently identified as a crucial  component of  the International Mobile Telecommunications (IMT)-2030 Framework issued by the International Telecommunication Union (ITU) \cite{imt2030}. The key feature of NOMA is its superior spectral efficiency, where the use of NOMA can ensure  that the scarce bandwidth resources are efficiently shared among multiple users, instead of being occupied by a single user \cite{you6g,mojobabook}. Another key feature of NOMA is its excellent compatibility  with other communication techniques. For example, the use of NOMA has been shown to be crucial for supporting  secure offloading in multi-access edge computing systems \cite{10363117}. In addition, the principle of NOMA is naturally applicable to integrated sensing and communications (ISAC), as shown in \cite{10024901}. Furthermore,  the application of NOMA has been shown to be important to   support   short-packet transmission in fluid antenna systems     \cite{10423153},  enhance the user fairness in reconfigurable intelligent surface (RIS) networks \cite{10311519}, support network slicing in cell-free systems \cite{10278091}, and realize integrated terrestrial-satellite communications \cite{7972935}. 

This paper is to focus on the compatibility between NOMA and conventional orthogonal multiple access (OMA) \cite{Zhiguo_CRconoma, 9693417,8823023}. This compatibility is important, not only because the next-generation multiple access  is envisioned to be a general framework of NOMA and OMA, but also because a NOMA form      compatible to OMA can be immediately deployed in the existing OMA networks without the need to modify the current wireless standards  \cite{9352956,9693536,8641304}. In particular, this paper focuses on hybrid NOMA (H-NOMA) which offers two benefits  \cite{hnomadown,9340353,9679390}. One benefit is that the implementation of H-NOMA does not cause any disruption to the existing OMA network. Take time division multiple access (TDMA) as an example of OMA. By using H-NOMA,   a user   still finishes  its transmission by the end of its own TDMA   time slot, and   also has  access to multiple time slots, instead of one as in TDMA. The other benefit is that   H-NOMA enables        multiple-time-slot resource allocation which yields better performance than conventional  single-time-slot resource allocation. 

Recall that one of the key applications of NOMA is massive machine type communication (mMTC), and mMTC  is uniquely featured by energy constrained devices, such as Internet of Things (IoT) sensors and smart-home   devices.  This energy constraint feature has motivated the combination of NOMA with backscatter communications (BackCom) which realizes the following two types of cooperation among multiple users \cite{9261963, backnoma}.  The first type is the spectrum cooperation, i.e., the spectrum is more efficiently  shared among  the users with diverse quality of service requirements, such as virtual reality (VR) users and IoT sensors. The second type is the energy cooperation among the users with different energy profiles, e.g.,  users with energy supplies can help those energy constrained sensors via BackCom \cite{9706269, 10244014,10412664}.  We note that the idea of BackCom assisted NOMA is closely related to symbiotic radio \cite{8907447,8636518}, since the two concepts are based on the same idea that one user employs  another user's signal as carrier signals via BackCom. However, to the author's best knowledge, the concept of H-NOMA has not been applied to BackCom assisted NOMA or symbiotic radio in the literature, which motivates this paper. 

The aim of this paper is to develop a new BackCom assisted H-NOMA uplink transmission scheme and also provide the performance analysis for revealing the key features of this new transmission scheme.  The contributions of the paper are listed as follows:
\begin{itemize}
\item By assuming the existence of a multi-user  TDMA legacy network, a new BackCom assisted H-NOMA uplink transmission scheme is developed. This H-NOMA scheme is a general transmission framework, where a few well-known transmission strategies, such as pure OMA and pure NOMA, are its special cases. Recall that the key feature of BackCom assisted NOMA is that one user's information bearing signal can be employed  as the carrier signal by other users. By using this feature, a successive interference cancellation (SIC)   strategy  tailored to BackCom assisted H-NOMA is developed, which makes the expressions of the  data rates achievable to BackCom assisted H-NOMA   different from those of     conventional H-NOMA   shown in \cite{hnomadown,9679390,9340353}. A challenging non-convex power minimization problem is also formulated for BackCom assisted H-NOMA, where how different choices of power allocation lead to pure OMA, pure NOMA and H-NOMA are illustrated. 

\item The unique features of BackCom assisted H-NOMA and its difference from   conventional H-NOMA are revealed by developing the analytical results for   the two-user special case. Recall for conventional H-NOMA uplink \cite{9679390}, if the durations of time slots are  same, the pure OMA mode is always optimal, i.e., pure OMA outperforms both pure NOMA and H-NOMA. Our developed analytical results show that for BackCom assisted H-NOMA, pure OMA can never be optimal, i.e., either pure NOMA or H-NOMA can outperform OMA. Another conclusion  made to conventional H-NOMA uplink is that the pure NOMA mode is never adopted. For the considered BackCom assisted H-NOMA scheme, the analytical results are developed to show that pure NOMA could be optimal, i.e., it is possible for pure NOMA to outperform both pure OMA and H-NOMA. These conclusions  reveal the key difference between BackCom assisted H-NOMA and   conventional H-NOMA, and they are  due to the fact that with  BackCom,  NOMA transmission does not cause  extra energy consumption to the users.    

\item For the general multi-user scenario,  two algorithms are developed to realize different tradeoffs between the system performance and complexity. First, to tackle the non-convexity of the formulated power minimization problem, a  successive convex approximation (SCA) algorithm is developed to yield a low-complexity sup-optimal solution \cite{6365845,7946258}.  Second, to identify the lower bound on the overall power consumption required by BackCom assisted H-NOMA, a branch and bound (BB) algorithm is also developed to obtain the optimal solution of the formulated power minimization problem \cite{5765556,8170332}. Recall that  the key step of a BB algorithm is the feasibility check. An intuitive approach is to again apply SCA for the feasibility check; however, our simulation results show that this SCA based BB algorithm achieves  poor performance and results in high computational complexity. A successive resource allocation based feasibility approach is developed by using the successive nature of SIC.  Simulation results are provided to demonstrate the superior performance gain of H-NOMA over conventional OMA, and verify the accuracy of the developed analytical results. 
\end{itemize}
   
   The remainder  of this paper is organized as follows. In Section \ref{section 2}, the BackCom assisted H-NOMA uplink scheme is developed, and an overall power minimization problem is formulated.   In Sections \ref{section 3}, insightful understandings about BackCom assisted H-NOMA are obtained by focusing on the two-user special case.   In Section \ref{section 4}, two algorithms based BB and SCA are developed for carrying out resource allocation in the general multi-user case.  In Section \ref{section 5}, the performance of BackCom assisted H-NOMA is evaluated by using computer simulation results,    and  Section \ref{section 6} concludes the paper.  
   
   \section{System Model}\label{section 2}
Consider an uplink communication scenario with $M$ users, denoted by ${\rm U}_m$, communicating with the same base station, where each node is     equipped with a single antenna. In addition, it is assumed that each user is equipped with   a BackCom circuit. Furthermore, assume that the users are ordered according to their channel gains, i.e., $|h_1|^2< \cdots < |h_M|^2$, where $h_m$ denotes ${\rm U}_m$'s channel gain. The rationale behind this channel ordering will be discussed in Section \ref{ssubse}.

\subsection{A TDMA Based Legacy Network}
With TDMA, the $M$ users take turns to send their information to the base station. Without loss of generality, assume that    ${\rm U}_m$ is served in the $m$-th time slot.  With OMA, in the $m$-th time slot, ${\rm U}_m$ receives the following signal: $y^{\rm OMA}_m = h_m\sqrt{P_m^{\rm OMA}}s_m+n_m$, where   $P_m^{\rm OMA}$ denotes ${\rm U}_m$'s transmit power in OMA,  $s_m$ denotes ${\rm U}_m$'s unit-power signal, and $n_m$ denotes additive white Gaussian noise. Therefore, ${\rm U}_m$'s data rate in OMA is simply given by $R^{\rm OMA}_m=\log\left(1+ |h_m|^2P_m^{\rm OMA}\right)$, where   the noise power is assumed to be normalized. It is straightforward to show that in order to achieve   the target data rate, denoted by $R$, ${\rm U}_m$'s transmit power, $P_m^{\rm OMA}$, needs to be $P_m^{\rm OMA}=\frac{e^R-1}{|h_m|^2}$, and hence the minimal total power consumption in OMA is given by
\begin{align}\label{oma solution}
P^{\rm OMA} = \sum^{M}_{m=1}\frac{e^R-1}{|h_m|^2}.
\end{align}

\subsection{BackCom Assisted H-NOMA}\label{ssubse}
 The aim of this paper is to implement BAC-NOMA as an add-on in the aforementioned  TDMA legacy network. 
In particular, during the first time slot which   belongs to ${\rm U}_1$ in TDMA, ${\rm U}_m$, $2\leq m \leq M$, also transmits its signal to the base station by modulating and reflecting the signal from ${\rm U}_1$. As a result, during the first time slot, the base station receives the following signal:
\begin{align}
y_1 = \sqrt{P_1}h_1s_{11}+\sum^{M}_{m=2}h_mg_{m1}s_{11}\sqrt{\eta_{m1}P_1}s_{m1}+n_1
\end{align}
where   $g_{mi}$ denotes the channel gain between ${\rm U}_i$ and ${\rm U}_m$, $s_{mi}$ denotes the unit-power signal sent by ${\rm U}_m$ in the $i$-th time slot, $\eta_{mi}$ denotes ${\rm U}_m$'s reflecting coefficient in the $i$-th time slot, $\eta_{mi}\leq 1$, $P_m$ denotes ${\rm U}_m$'s transmit power in the $m$-th time slot,  and $n_m$ denotes the corresponding additive white Gaussian noise. 

The base station needs to carry out SIC to decode the users' signals. The SIC strategy adopted in this paper is to decode  ${\rm U}_1$'s signal   first, since its signal can be viewed as a carrier signal and required to be known when the other users' signals are decoded. 
Therefore, during the first time slot, ${\rm U}_1$'s signal can be decoded with the following data rate:
\begin{align}\label{eqx1}
R_{11} = \log\left(
1+\frac{P_1|h_1|^2}{\sum^{M}_{m=2}|h_m|^2|g_{m1}|^2  {\eta_{m1}P_1} +1}
\right),
\end{align}
where the noise power is assumed to be normalized. Similarly, after ${\rm U}_{m-1}$'s signal is decoded  successfully,  ${\rm U}_m$'s   signal can be decoded in the first time slot with the following data rate:
\begin{align}\label{eqx2}
R_{m1} = \log\left(
1+\frac{|h_m|^2|g_{m1}|^2  {\eta_{m1}P_1} }{\sum^{M}_{j=m+1}|h_j|^2|g_{j1}|^2  {\eta_{j1}P_1} +1}
\right).
\end{align}
The data rate expressions in \eqref{eqx1} and \eqref{eqx2} show that ${\rm U}_{m}$ will suffer more interference  than ${\rm U}_{i}$, $m<i$, which motivates the assumption that the users are ordered according to their channel conditions in a descending order. 
 
In general, the use of BackCom assisted H-NOMA can ensure that during the $i$-th time slot, ${\rm U}_m$ can achieve the following data rate:
\begin{align}
R_{mi} = \log\left(
1+\frac{|h_m|^2|g_{mi}|^2  {\eta_{mi}P_i} }{\sum^{M}_{j=m+1}|h_j|^2|g_{ji}|^2  {\eta_{ji}P_i} +1}
\right).
\end{align}

{\it Remark 1:} The main advantage of the proposed BackCom assisted H-NOMA scheme is its ambient feature, i.e.,   the additional NOMA transmissions are in a `battery-less' manner. For example, during the first time slot, ${\rm U}_2$'s transmission does not consume extra energy; instead, it relies on the existing OMA transmission carried out by ${\rm U}_1$, which is different from    those conventional  H-NOMA schemes   \cite{hnomadown,9679390}. 

However, the use of H-NOMA transmission may potentially increase the overall transmit power. For example,  in OMA,  ${\rm U}_1$ can solely occupy the first time slot. However, with BAC-NOMA, the signals reflected by ${\rm U}_m$, $2\leq m \leq M$, cause   interference to ${\rm U}_1$, which means that ${\rm U}_1$ might need to use more transmit power than that in OMA.  Motivated by this observation, the aim of this paper is to consider the following power minimization problem: 
  \begin{problem}\label{pb:1} 
  \begin{alignat}{2}
\underset{P_{m}, \eta_{mi}\geq0  }{\rm{min}} &\quad   \sum^{M}_{m=1} P_m \label{1tst:1}
\\ s.t. &\quad   \sum^{m}_{i=1}R_{mi}T\geq N,    1\leq m \leq M \label{1tst:2} 
\\ &\quad  \eta_{mi}\leq 1,   1\leq i \leq m-1,  1\leq m \leq M,\label{1tst:2} 
  \end{alignat}
\end{problem} 
where $T$ denotes the duration of each time slot, and $N$ denotes the target number of nats to be sent within $M$ time slots. 

{\it Remark 2:} Compared to the power minimization problem considered for conventional H-NOMA in \cite{9679390}, problem \eqref{pb:1} is more challenging. In particular, successive resource allocation which is shown to be Pareto optimal to conventional H-NOMA in \cite{9679390} cannot be applied. The reason for this challenge is that the users' individual decisions for their transmit powers and reflecting coefficients are strongly  coupled. For example, for conventional H-NOMA, one user's signal  is always treated as the interference to the other users, i.e., one user's transmit power   always appears in the denominator of the other users' signal-to-interference-plus-noise ratios (SINRs). However, due to the use of BackCom, one user's signal becomes useful to the other users, i.e., one user's signal is used as the carrier signal by the other users, and hence    one user's transmit power can also appear in the numerator of the other users' SINRs. Such coupling makes the formulated optimization problem challenging to solve.  

{\it Remark 3:} The proposed   BackCom assisted H-NOMA scheme is a general framework of OMA and NOMA. For example, by setting $\eta_{mi}=0$, for $1\leq i \leq m-1$ and $1\leq m \leq M$, the H-NOMA scheme is degraded to the pure OMA mode. More details about how different choices of $P_m$ and $\eta_{mi}$ lead to different modes will be provided  in the following section.   An intuition  for the choices of the reflecting coefficients is that each user should set $\eta_{mi}$ as $1$, i.e., its reflection should be  in a `full blasting' manner,  since a user's NOMA transmission does not consume its own battery. However, the insight obtained from the following two-user special case      shows that this intuition  is not true. 
\section{Performance Analysis for The Two-User Special Case}  \label{section 3}
In this section, the two-user special case of problem \eqref{pb:1} is focused on, where the obtained analytical results offer insight not only to the difference between   BackCom assisted H-NOMA and the conventional ones, but also to the key features of the considered optimization problem. 

With $M=2$, problem \eqref{pb:1} can be simplified as follows: 
  \begin{problem}\label{pb:2} 
  \begin{alignat}{2}
\underset{   }{\rm{min}} &\quad   P_1+P_2 \label{2tst:1}
\\ s.t. &\quad      \log\left(
1+|h_2|^2|g_{21}|^2  \eta_{21}P_1 
\right)+ \log\left(
1+ |h_2|^2P_2
\right)\geq R\label{2tst:2}
\\&\quad 
   \log\left(
1+\frac{|h_1|^2 P_1 }{ |h_2|^2|g_{21}|^2  {\eta_{21}P_1} +1}
\right)\geq R\label{2tst:3}
\\&\quad  \eta_{21}\geq 0, P_1\geq 0, P_2\geq 0,
  \end{alignat}
\end{problem} 
wher  $R=\frac{N}{T}$. Problem \eqref{pb:2} is not a convex optimization problem due to the coupling effects between the transmit powers and the reflecting coefficient. 

By defining  $\gamma_1=|h_1|^2$, $\gamma_2=|h_2|^2$,  $\gamma_0=|h_2|^2|g_{21}|^2$, and $P_0=\eta_{21}P_1$,  problem \eqref{pb:2} can be   simplified as follows:
 \begin{problem}\label{pb:3} 
  \begin{alignat}{2}
\underset{P_{i}\geq0  }{\rm{min}} &\quad   P_1+P_2 \label{3tst:0}
\\  s.t. &\quad       R- \log\left(
1+\gamma_0  P_0 
\right)-\log\left(
1+ \gamma_2P_2
\right)\leq 0\label{3tst:1}
\\&\quad  \label{3tst:2}
  \epsilon \gamma_0 P_0+\epsilon-\gamma_1 P_1\leq 0
 \\&\quad P_0\leq P_1, \label{3tst:3}
  \end{alignat}
\end{problem} 
where $\epsilon=e^R-1$. It is straightforward to show that problem \eqref{pb:3} is a convex optimization problem. However, compared to conventional  H-NOMA, the use of BackCom makes the analysis   of the optimal solution   more complicated. In particular,        the optimal solution of   problem \eqref{pb:3} could be in multiple forms,   as listed in Table \ref{table1}, where $f_{\rm (P3c)}$ is the function at the left hand side  of \eqref{3tst:2}. 

\begin{table}[t]\vspace{-1em}
  \centering
  \caption{ Potential Solutions of Problem \eqref{pb:3} }
  \begin{tabular}{|c|c|c|c|c|c|}
\hline
&    Pure&Pure  & H-NOMA &H-NOMA &H-NOMA \\&  OMA   &  NOMA& Type I&Type II&Type III \\
    \hline
   $P_1$      &$\neq0$	&$\neq0$ &$\neq0$ &$\neq 0$  &$\neq 0$  
    \\
    \hline
    $P_2$     &$\neq0$ &
    			  $=0$ &	$\neq0$  	&$\neq0$ &$\neq 0$     \\
    \hline
    $P_0$     &$=0$ &
    			  $\neq0$ &	$\neq P_1$  	&$= P_1$   &$= P_1$ \\
    \hline
    $f_{\rm (P3c)}$     &  &
    			    &	   	&$\neq 0$ &$=0$ \\\hline
  \end{tabular}\vspace{1em}\label{table1}\vspace{-1em}
\end{table}  

To facilitate the performance analysis of BackCom assisted H-NOMA, the following lemma is provided to show the optimality of the pure OMA solution. 
\begin{lemma}\label{lemma1}
 The pure OMA solution shown in \eqref{oma solution} cannot be the optimal solution of problem \eqref{pb:3}.  
\end{lemma}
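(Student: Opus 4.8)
The plan is to show that the pure OMA solution, characterized by $P_0 = 0$, is never optimal for problem \eqref{pb:3} by exhibiting a strictly better feasible point. With $P_0=0$, constraint \eqref{3tst:1} becomes $R - \log(1+\gamma_2 P_2) \le 0$, forcing $P_2 \ge \frac{\epsilon}{\gamma_2}$, and constraint \eqref{3tst:2} becomes $\epsilon - \gamma_1 P_1 \le 0$, forcing $P_1 \ge \frac{\epsilon}{\gamma_1}$; so the best pure OMA objective is exactly $P^{\rm OMA} = \frac{\epsilon}{\gamma_1} + \frac{\epsilon}{\gamma_2}$, matching \eqref{oma solution} for $M=2$. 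The idea is then a standard perturbation argument: starting from $(P_1, P_2, P_0) = \left(\frac{\epsilon}{\gamma_1}, \frac{\epsilon}{\gamma_2}, 0\right)$, turn on a small amount of backscatter, $P_0 = \delta > 0$ with $\delta$ small, which relaxes constraint \eqref{3tst:1} (the $\log(1+\gamma_0 P_0)$ term adds slack), allowing $P_2$ to be reduced, while \eqref{3tst:2} tightens only mildly in $P_1$, and \eqref{3tst:3} $P_0 \le P_1$ holds trivially for small $\delta$.

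Concretely, I would parametrize: keep \eqref{3tst:1} active, so $\log(1+\gamma_0\delta) + \log(1+\gamma_2 P_2) = R$, giving $P_2(\delta) = \frac{1}{\gamma_2}\left(\frac{e^R}{1+\gamma_0 \delta} - 1\right)$, and keep \eqref{3tst:2} active, so $P_1(\delta) = \frac{\epsilon + \epsilon \gamma_0 \delta}{\gamma_1} = \frac{\epsilon(1+\gamma_0\delta)}{\gamma_1}$. The total power is then $\Phi(\delta) = P_1(\delta) + P_2(\delta)$, and the key computation is $\Phi'(0)$: differentiating, $P_1'(0) = \frac{\epsilon \gamma_0}{\gamma_1}$ and $P_2'(0) = \frac{1}{\gamma_2}\cdot\left(-e^R \gamma_0\right) = -\frac{\gamma_0 e^R}{\gamma_2}$, so $\Phi'(0) = \gamma_0\left(\frac{\epsilon}{\gamma_1} - \frac{e^R}{\gamma_2}\right) = \gamma_0\left(\frac{e^R-1}{\gamma_1} - \frac{e^R}{\gamma_2}\right)$. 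The point I expect to need is that this derivative is negative, i.e., $\frac{e^R-1}{\gamma_1} < \frac{e^R}{\gamma_2}$; since $\gamma_1 = |h_1|^2 < |h_2|^2 = \gamma_2$ by the channel ordering, we have $\frac{e^R-1}{\gamma_1} < \frac{e^R}{\gamma_1}$ and $\frac{e^R}{\gamma_1} > \frac{e^R}{\gamma_2}$ — so this ordering alone is not quite enough and I will instead note $\frac{e^R-1}{\gamma_1} < \frac{e^R}{\gamma_2} \iff \gamma_2(e^R-1) < \gamma_1 e^R$, which may fail; hence the cleaner route is to perturb $P_1$ downward as well rather than keeping \eqref{3tst:2} active with the inflated $P_1$.

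A more robust version of the argument: fix $P_1 = \frac{\epsilon}{\gamma_1}$ at its OMA value and ask whether $P_0$ can be made positive while still satisfying \eqref{3tst:2}. Constraint \eqref{3tst:2} reads $\epsilon \gamma_0 P_0 + \epsilon \le \gamma_1 P_1 = \epsilon$, which forces $P_0 = 0$ — so with $P_1$ held fixed, no backscatter is admissible. This shows one genuinely must raise $P_1$ to exploit backscatter, and the net effect on total power is the sign question above. I therefore expect the honest proof to require a two-parameter perturbation or a direct KKT/complementary-slackness argument: assume for contradiction that the optimum has $P_0 = 0$; write the KKT conditions for the convex problem \eqref{pb:3}; and show the multiplier configuration forced by $P_0=0$ is inconsistent — specifically, the stationarity condition in the $P_0$ direction will involve the multiplier on \eqref{3tst:1} (which is strictly positive since \eqref{3tst:1} must be active) times $\frac{\gamma_0}{1+\gamma_0 P_0}$, and this term makes increasing $P_0$ strictly decrease the Lagrangian unless the multiplier on \eqref{3tst:2} is large, but then stationarity in $P_1$ forces $P_1 > \frac{\epsilon}{\gamma_1}$, contradicting that $P_0=0$ implies \eqref{3tst:2} reads $\epsilon \le \gamma_1 P_1$ with the objective pushing $P_1$ down to $\frac{\epsilon}{\gamma_1}$.

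The main obstacle, as the sketch above reveals, is that a naive one-dimensional perturbation does not obviously decrease the objective because of the competing effects: backscatter relaxes the sum-rate constraint \eqref{3tst:1} (good, lets $P_2$ drop) but tightens the SIC constraint \eqref{3tst:2} (bad, forces $P_1$ up), and the comparison of these two effects is exactly the inequality $\frac{e^R-1}{\gamma_1}$ versus $\frac{e^R}{\gamma_2}$ whose sign is not fixed by the hypotheses. I expect the correct argument to sidestep this by not insisting \eqref{3tst:2} stay active: instead, increase $P_0$ and simultaneously decrease \emph{both} $P_1$ and $P_2$ along the boundary of \eqref{3tst:1}, using the slack that appears in \eqref{3tst:2} when $P_1$ is at its OMA value but $P_0$ is still zero — then a first-order move increasing $P_0$ from $0$ keeps \eqref{3tst:2} feasible (its left side changes at rate $\epsilon\gamma_0 > 0$ from a value of exactly $0$, so we need a compensating increase in $P_1$ at rate $\epsilon\gamma_0/\gamma_1$, which is $o(1)$ relative to the $O(1)$ decrease available in $P_2$ from the relaxed \eqref{3tst:1})... and this again returns to the sign of $\Phi'(0)$. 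Given this, I suspect the paper's actual proof uses the explicit structure: solve \eqref{3tst:1} and \eqref{3tst:2} as equalities to eliminate $P_1$ and $P_2$ in terms of $P_0$, obtain the total power as an explicit one-variable function of $P_0$, and show its derivative at $P_0 = 0$ is strictly negative using a sharper inequality than mere channel ordering — perhaps that the OMA point fails to satisfy \eqref{3tst:3} or some feasibility subtlety, or an algebraic identity I am not yet seeing — so the hard part will be pinning down precisely which inequality among the problem parameters makes $\Phi'(0) < 0$ hold unconditionally.
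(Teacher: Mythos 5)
Your perturbation is the right primal move, and the quantity you compute is exactly the certificate the paper uses: the paper's proof writes the KKT conditions of the convex problem \eqref{pb:3} with explicit multipliers $\lambda_1,\dots,\lambda_6$ (including $\lambda_6$ for $P_0\ge 0$), solves for the multipliers forced by the OMA point, and finds $\lambda_6=\gamma_0\bigl(\tfrac{\epsilon}{\gamma_1}-\tfrac{1+\epsilon}{\gamma_2}\bigr)$ --- which is precisely your $\Phi'(0)$ since $1+\epsilon=e^R$ --- and then argues $\lambda_6<0$, so the KKT conditions cannot hold and OMA cannot be optimal. Your first parametrization (keep \eqref{3tst:1} and \eqref{3tst:2} active, raise $P_0$ to $\delta$) is therefore the correct and essentially unique first-order move, and your observation that \eqref{3tst:2} is tight at the OMA point (so $P_1$ must rise at rate $\epsilon\gamma_0/\gamma_1$) correctly explains why no cleverer perturbation avoids the trade-off.

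The genuine gap is that you stop at the sign of $\Phi'(0)=\gamma_0\bigl(\tfrac{\epsilon}{\gamma_1}-\tfrac{e^R}{\gamma_2}\bigr)$ and declare it undetermined. It is determined by the channel ordering the appendix actually invokes, namely $\gamma_2=|h_2|^2<|h_1|^2=\gamma_1$: the users are ordered by \emph{descending} channel gain (as stated in the discussion following the rate expressions in Section \ref{ssubse}; the displayed ascending chain in the system model is inconsistent with that remark and with the appendix, which is what misled you). With $\gamma_2\le\gamma_1$ and $\epsilon=e^R-1<e^R$ one gets $\tfrac{\epsilon}{\gamma_1}<\tfrac{e^R}{\gamma_2}$ immediately, hence $\Phi'(0)<0$; your perturbed point is feasible for small $\delta>0$ (in particular $P_0=\delta\le P_1(\delta)\to\epsilon/\gamma_1>0$ and $P_2(\delta)>0$), which finishes the proof. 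Your hesitation is substantive rather than a failure of technique: under the ascending ordering you assumed, $\epsilon(\gamma_2-\gamma_1)<\gamma_1$ can fail for large $R$, in which case all six multipliers at the OMA point are nonnegative, the KKT conditions are satisfied, and OMA \emph{is} optimal --- so the lemma really does hinge on the descending ordering, and no argument can make $\Phi'(0)<0$ hold unconditionally.
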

\begin{proof}
See Appendix \ref{proof1}. 
\end{proof}
{\it Remark 3:} Lemma \ref{lemma1} reveals the first interesting feature of BackCom assisted H-NOMA. Recall that for conventional H-NOMA, if the durations of time slots are equal, the OMA solution is   optimal and outperforms the NOMA solutions \cite{9679390}. Lemma \ref{lemma1} illustrates that OMA can never be optimal for   problem \eqref{pb:3},  a key difference between BackCom assisted H-NOMA and conventional H-NOMA.

Table \ref{table1} shows that pure NOMA is also a potential transmission strategy. It is important to point out that pure NOMA cannot be optimal for conventional H-NOMA, as illustrated in the following lemma.

\begin{lemma}\label{lemma2}
For the power minimization problem of conventional H-NOMA, i.e., problem \eqref{pb:3} with $P_0+P_1+P_2$ as its objective function and without the constraint $P_0\leq P_1$, the pure NOMA solution cannot be optimal. 
\end{lemma}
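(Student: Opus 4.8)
The plan is to eliminate $P_1$ and $P_2$ by exploiting the two rate constraints (which are both active at any optimum), reducing the conventional H-NOMA problem to a one-dimensional strictly convex program, and then to show that the point corresponding to the pure-NOMA mode is the endpoint of the feasible interval at which the objective is still strictly decreasing, so it cannot be a minimizer.

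First I would write the conventional H-NOMA problem explicitly. For conventional H-NOMA, ${\rm U}_2$'s additional transmission in the first time slot is a direct, self-powered transmission rather than a reflection of ${\rm U}_1$'s signal, so in problem \eqref{pb:3} the quantity $\gamma_0 P_0$ is replaced by $\gamma_2 P_0$ and $P_0$ now denotes ${\rm U}_2$'s own first-slot transmit power, which is precisely why it enters the objective. Thus the problem is to minimize $P_0+P_1+P_2$ over $P_0,P_1,P_2\ge 0$ subject to $R-\log(1+\gamma_2 P_0)-\log(1+\gamma_2 P_2)\le 0$ and $\epsilon\gamma_2 P_0+\epsilon-\gamma_1 P_1\le 0$.

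Next I would argue that at any optimal solution both constraints are active: $P_1$ occurs only as a lower-bounded variable in the second constraint and with a positive coefficient in the objective, so $P_1=\epsilon(\gamma_2 P_0+1)/\gamma_1$; and if the first (rate) constraint were slack one could decrease $P_2$, or, when $P_2=0$, decrease $P_0$ and then $P_1$, and strictly reduce the objective. I would then set $x:=1+\gamma_2 P_0\in[1,\epsilon+1]$ and use the active rate constraint $(1+\gamma_2 P_0)(1+\gamma_2 P_2)=e^R=\epsilon+1$ to write $P_2=\frac1{\gamma_2}\!\left(\frac{\epsilon+1}{x}-1\right)$ and $P_1=\frac{\epsilon x}{\gamma_1}$. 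Since this parametrization captures every optimal solution, the optimal value equals $\min_{x\in[1,\epsilon+1]}g(x)$ with
\[
g(x)=\frac{x-1}{\gamma_2}+\frac{\epsilon x}{\gamma_1}+\frac{1}{\gamma_2}\left(\frac{\epsilon+1}{x}-1\right).
\]

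Finally, the pure-NOMA mode is characterized by $P_2=0$ (see Table~\ref{table1}), i.e.\ $x=\epsilon+1$, the right endpoint of the interval. A direct computation gives
\[
g'(x)=\frac{1}{\gamma_2}+\frac{\epsilon}{\gamma_1}-\frac{\epsilon+1}{\gamma_2 x^2},\qquad g'(\epsilon+1)=\frac{\epsilon}{\gamma_2(\epsilon+1)}+\frac{\epsilon}{\gamma_1}>0,
\]
so reducing $x$ slightly below $\epsilon+1$ produces a feasible point with $P_2>0$ and strictly smaller total power; hence pure NOMA cannot be optimal. (Since $g$ is strictly convex and moreover $g'(1)=\epsilon(\tfrac1{\gamma_1}-\tfrac1{\gamma_2})>0$ because $\gamma_1<\gamma_2$, the minimizer is in fact $x=1$, i.e.\ pure OMA, which is consistent with the known result for conventional H-NOMA.) The only steps that need care are verifying that both constraints must be active at every optimum — so that the one-dimensional reduction is exact — and keeping track of the fact that here, unlike in the BackCom scheme, ${\rm U}_2$ pays for its own first-slot transmission; that difference is exactly what reverses the conclusion relative to Lemma~\ref{lemma1}.
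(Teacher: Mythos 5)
Your proof is correct, and it takes a genuinely different route from the paper's. The paper proves the lemma by a KKT certificate: it first derives the pure-NOMA point in closed form ($P_1=\epsilon(\epsilon+1)/|h_1|^2$, $P_0=\epsilon/|h_2|^2$, $P_2=0$), solves the stationarity and complementary-slackness system for the multipliers at that point, and shows that the multiplier attached to $P_2\geq 0$ evaluates to $\lambda_5=-\epsilon-(1+\epsilon)\epsilon|h_2|^2/|h_1|^2<0$, so the KKT conditions fail and (the problem being convex) the point cannot be optimal. You instead eliminate $P_1$ and $P_2$ via the two active constraints, reduce to the one-dimensional convex function $g(x)$ on $[1,\epsilon+1]$, and exhibit an explicit improving direction at the pure-NOMA endpoint $x=\epsilon+1$ from $g'(\epsilon+1)>0$. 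Your argument is more elementary (no multipliers), gives a constructive feasible point that strictly beats pure NOMA, and as a bonus recovers the global minimizer; the paper's argument is shorter given that its Appendix C/D machinery already sets up the same Lagrangian. Both establish the claim. Two small remarks: (i) your substitution of $\gamma_2$ for $\gamma_0$ in the $P_0$ terms matches what the paper's own Appendix B actually does (its problem (P\ref{pb:x3}) uses $|h_2|^2P_0$), and the sign of $g'(\epsilon+1)$ is positive either way, so nothing hinges on it; (ii) your parenthetical identification of $x=1$ (pure OMA) as the minimizer relies on $\gamma_1<\gamma_2$, which agrees with the ordering stated in Section~\ref{section 2} but not with the ordering invoked in the paper's proof of Lemma~\ref{lemma1} — this affects only the aside, not the lemma itself, since $g'(\epsilon+1)=\epsilon/(\gamma_2(\epsilon+1))+\epsilon/\gamma_1>0$ holds under either ordering.
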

\begin{proof}
See Appendix \ref{proof2}.
\end{proof}

For BackCom assisted H-NOMA, the following conclusion can be made to pure NOMA.

\begin{lemma}\label{lemma3}
Pure NOMA is a potentially optimal transmission strategy for BackCom assisted H-NOMA.   In particular, if $ \frac{\gamma_2} { \gamma_0}\leq \frac{1}{1+\epsilon}$ and $  \quad \frac{ \gamma_1}{\gamma_0}\geq  (1+\epsilon)$,   $P_0^*=P_1^*=\frac{\epsilon}{\gamma_0}$ and $P_2^*=0$. If  $ \frac{\gamma_2}{\gamma_1} \leq \frac{1}{\epsilon (1+\epsilon)}$ and $ \frac{ \gamma_1}{\gamma_0}\leq  (1+\epsilon)$, $P_0^*=\frac{\epsilon}{\gamma_0}$,  $P_1^*=\frac{\epsilon(1+\epsilon)}{\gamma_1}$ and $P_2^*=0$. 
\end{lemma}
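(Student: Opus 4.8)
The plan is to exploit the fact, already noted in the text, that problem~\eqref{pb:3} is convex; for a convex program the Karush--Kuhn--Tucker (KKT) conditions are \emph{sufficient} for global optimality (no constraint qualification is needed for sufficiency), so it is enough to exhibit, in each of the two stated parameter regimes, a feasible primal point of the claimed pure-NOMA form together with nonnegative Lagrange multipliers satisfying stationarity and complementary slackness. I would introduce the Lagrangian $L=P_1+P_2+\lambda_1\bigl[R-\log(1+\gamma_0P_0)-\log(1+\gamma_2P_2)\bigr]+\lambda_2\bigl[\epsilon\gamma_0P_0+\epsilon-\gamma_1P_1\bigr]+\lambda_3\bigl[P_0-P_1\bigr]-\mu_0P_0-\mu_1P_1-\mu_2P_2$, with $\lambda_1,\lambda_2,\lambda_3,\mu_0,\mu_1,\mu_2\geq0$, and write out the three stationarity equations obtained from $\partial L/\partial P_0$, $\partial L/\partial P_1$, $\partial L/\partial P_2$.

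The first step is to pin down the candidate points and their active sets. Setting $P_2=0$ makes~\eqref{3tst:1} read $\gamma_0P_0\geq\epsilon$; since raising $P_0$ only tightens~\eqref{3tst:2} and~\eqref{3tst:3}, the optimal choice is $P_0=\epsilon/\gamma_0$, after which~\eqref{3tst:2} forces $P_1\geq\epsilon(1+\epsilon)/\gamma_1$ and~\eqref{3tst:3} forces $P_1\geq\epsilon/\gamma_0$. The sign of $\gamma_1/\gamma_0-(1+\epsilon)$ therefore decides which bound is tight: when $\gamma_1/\gamma_0\geq1+\epsilon$ the binding constraint is~\eqref{3tst:3}, giving the candidate $P_0^\star=P_1^\star=\epsilon/\gamma_0$ with~\eqref{3tst:2} slack (Case A); when $\gamma_1/\gamma_0\leq1+\epsilon$ the binding constraint is~\eqref{3tst:2}, giving the candidate $P_0^\star=\epsilon/\gamma_0$, $P_1^\star=\epsilon(1+\epsilon)/\gamma_1$ with~\eqref{3tst:3} slack (Case B). Both candidates are readily checked to be feasible under the corresponding hypotheses, and both are pure-NOMA solutions in the sense of Table~\ref{table1} (that is, $P_2^\star=0$ while $P_0^\star,P_1^\star>0$).

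It then remains to produce nonnegative multipliers. In Case A I would take $\mu_0=\mu_1=0$ (since $P_0,P_1>0$) and $\lambda_2=0$ (since~\eqref{3tst:2} is slack); the $\partial L/\partial P_1$ equation gives $\lambda_3=1$, the $\partial L/\partial P_0$ equation gives $\lambda_1=(1+\epsilon)/\gamma_0$, and the $\partial L/\partial P_2$ equation evaluated at $P_2=0$ gives $\mu_2=1-\lambda_1\gamma_2=1-(1+\epsilon)\gamma_2/\gamma_0$, which is $\geq0$ exactly when $\gamma_2/\gamma_0\leq1/(1+\epsilon)$ --- precisely the remaining hypothesis of Case A. In Case B I would take $\mu_0=\mu_1=0$ and $\lambda_3=0$; then $\partial L/\partial P_1$ gives $\lambda_2=1/\gamma_1$, $\partial L/\partial P_0$ gives $\lambda_1=\epsilon(1+\epsilon)/\gamma_1$, and $\partial L/\partial P_2$ gives $\mu_2=1-\epsilon(1+\epsilon)\gamma_2/\gamma_1\geq0$ exactly when $\gamma_2/\gamma_1\leq1/(\epsilon(1+\epsilon))$ --- the remaining hypothesis of Case B. In each case all multipliers are nonnegative and complementary slackness holds by construction, so by convexity the candidate is globally optimal; since $P_2^\star=0$, pure NOMA is indeed optimal, which proves the lemma.

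The step I expect to be the main obstacle is the careful bookkeeping of the active set, in particular at the degenerate boundary $\gamma_1/\gamma_0=1+\epsilon$, where~\eqref{3tst:2} and~\eqref{3tst:3} are simultaneously tight and the two candidate points coincide; there one has a one-parameter family of admissible $(\lambda_2,\lambda_3)$ and must only check that it can be chosen so as to keep $\mu_2\geq0$, which is automatic because the two $\gamma_2$-conditions agree on that boundary. As a cross-check, or an alternative route that avoids duality altogether, one can bound $P_1+P_2$ from below directly: substituting $x=\gamma_0P_0$ and the tight choices above reduces the objective to a single-variable function on $x\in[0,\epsilon]$ whose monotone decrease on that interval is equivalent to the stated hypotheses (using, in Case A, that $\gamma_2/\gamma_0\leq1/(1+\epsilon)$ together with $\gamma_1/\gamma_0\geq1+\epsilon$ imply $\gamma_2/\gamma_1\leq1/(\epsilon(1+\epsilon))$), so the minimum is attained at $x=\epsilon$, i.e. at the claimed pure-NOMA point.
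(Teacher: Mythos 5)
Your proposal is correct and follows essentially the same route as the paper's Appendix C: identify the two pure-NOMA candidates $P_0^\star=\epsilon/\gamma_0$ with $P_1^\star=\max\{\epsilon/\gamma_0,\epsilon(1+\epsilon)/\gamma_1\}$, then verify the KKT conditions of the convex problem \eqref{pb:3}, with the nonnegativity of the multiplier on $P_2\geq 0$ (your $\mu_2$, the paper's $\lambda_5$) yielding exactly the two stated conditions $\gamma_2/\gamma_0\leq 1/(1+\epsilon)$ and $\gamma_2/\gamma_1\leq 1/(\epsilon(1+\epsilon))$. The multiplier values you compute in Cases A and B coincide with the paper's, so the argument is a match.
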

\begin{proof}
See Appendix \ref{proof3}.
\end{proof}
{\it Remark 4:}   Lemma \ref{lemma3}  illustrates another key difference between BackCom assisted H-NOMA and conventional H-NOMA. In particular,   pure NOMA cannot be optimal to conventional H-NOMA, as shown in Lemma \ref{lemma2}, but could be   an optimal strategy in BackCom assisted H-NOMA networks, as shown in Lemma \ref{lemma3}.  

{\it Remark 5:} Lemma \ref{lemma3} also reveals how complicated it is to obtain a closed-form expression for the optimal solution of problem \eqref{pb:3}. Even for a simple transmission strategy like pure NOMA, there are two possible expressions for the optimal power allocation. For a more complicated transmission strategy, such as H-NOMA, the expression of the corresponding optimal power allocation is more complicated, as shown in the following lemma.    

\begin{lemma}\label{lemma4}
For the considered power minimization problem, there are three possible forms of the H-NOMA solution, as listed in the following:
\begin{align}
 P_0^{\rm I}&= \frac{\gamma_1 \epsilon_1 }{\epsilon \gamma_0}-\frac{1}{\gamma_0},   
 P_1^{\rm I} =  \epsilon_1,
 P_2^{\rm I}=\epsilon_1 -\frac{1}{\gamma_2},
\end{align} 
\begin{align}
 P_1^{\rm II}=P_0^{\rm II}&= \sqrt{\frac{e^R}{ \gamma_0\gamma_2}} -\frac{1}{\gamma_0}, P_2^{\rm II} =\sqrt{\frac{e^R}{ \gamma_0\gamma_2}} -\frac{1}{\gamma_2},
\end{align}
and
\begin{align}
&P_1^{\rm III}=P_0^{\rm III}=\frac{\epsilon}{\gamma_1-\epsilon \gamma_0}
,  P_2^{\rm III}=\frac{e^R(\gamma_1-\epsilon \gamma_0)}{\gamma_1\gamma_2} -\frac{1}{\gamma_2}.
\end{align}
where $\epsilon_1=\sqrt{\frac{\epsilon e^R}{ \gamma_2\gamma_1}} $.
\end{lemma}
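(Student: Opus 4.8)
The plan is to exploit the convexity of problem \eqref{pb:3}, already noted in the text, together with a simple perturbation argument, and then to enumerate the active-constraint patterns that are compatible with an H-NOMA operating point, i.e.\ a point with $P_1>0$, $P_2>0$ and $P_0>0$. First I would show that the rate constraint \eqref{3tst:1} must hold with equality at any H-NOMA optimum: if it were slack, then since $P_2>0$ and $P_2$ appears only in the objective and in the increasing term $\log(1+\gamma_2 P_2)$ of \eqref{3tst:1}, a small decrease of $P_2$ keeps feasibility and strictly lowers $P_1+P_2$, a contradiction. Next I would argue that at least one of the linear constraint \eqref{3tst:2} (i.e.\ $\epsilon\gamma_0 P_0+\epsilon-\gamma_1P_1\le0$) and the reflection constraint \eqref{3tst:3} (i.e.\ $P_0\le P_1$) is active: $P_1$ occurs only in the objective and in these two constraints (and in $P_1\ge0$, which is never binding since \eqref{3tst:2} forces $\gamma_1P_1\ge\epsilon>0$), so if both were slack a small decrease of $P_1$ would again strictly improve the objective. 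Hence precisely three active-constraint patterns survive, and they match the three H-NOMA rows of Table~\ref{table1}: \eqref{3tst:2} active with \eqref{3tst:3} slack (then $P_0<P_1$, Type I); \eqref{3tst:3} active with \eqref{3tst:2} slack (then $P_0=P_1$ and $f_{\rm(P3c)}\neq0$, Type II); and both active (then $P_0=P_1$ and $f_{\rm(P3c)}=0$, Type III).

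In each pattern the active rate constraint eliminates $P_2$. Writing $u=1+\gamma_0P_0\ge1$, equality in \eqref{3tst:1} reads $(1+\gamma_0P_0)(1+\gamma_2P_2)=e^R$, so $P_0=\frac{u-1}{\gamma_0}$ and $P_2=\frac{1}{\gamma_2}\!\left(\frac{e^R}{u}-1\right)$. For Type I, equality in \eqref{3tst:2} gives $P_1=\frac{\epsilon(1+\gamma_0P_0)}{\gamma_1}=\frac{\epsilon u}{\gamma_1}$, so the objective becomes the scalar function $g_{\rm I}(u)=\frac{\epsilon}{\gamma_1}u+\frac{e^R}{\gamma_2}\frac1u-\frac{1}{\gamma_2}$. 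For Type II, \eqref{3tst:3} gives $P_1=P_0=\frac{u-1}{\gamma_0}$, and the objective is $g_{\rm II}(u)=\frac{1}{\gamma_0}u+\frac{e^R}{\gamma_2}\frac1u-\frac{1}{\gamma_0}-\frac{1}{\gamma_2}$. For Type III, combining equality in \eqref{3tst:2} with $P_0=P_1$ gives the linear equation $(\gamma_1-\epsilon\gamma_0)P_1=\epsilon$, so $P_1=P_0=\frac{\epsilon}{\gamma_1-\epsilon\gamma_0}$ with no free variable left, and $P_2$ is read off from the rate equality after using $1+\gamma_0P_0=\frac{\gamma_1}{\gamma_1-\epsilon\gamma_0}$. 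Thus Types I and II reduce to one-dimensional minimizations and Type III is already pinned down.

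It remains to minimize $g_{\rm I}$ and $g_{\rm II}$. Each has the form $au+b/u+c$ with $a>0$ and $b=e^R/\gamma_2>0$, hence is strictly convex on $u>0$ with unique minimizer $u^\star=\sqrt{b/a}$: this gives $u^\star=\sqrt{\gamma_1e^R/(\epsilon\gamma_2)}$ for Type I and $u^\star=\sqrt{\gamma_0e^R/\gamma_2}$ for Type II. Substituting $u^\star$ back into $P_0=\frac{u^\star-1}{\gamma_0}$, into the corresponding expression for $P_1$, and into $P_2=\frac1{\gamma_2}(e^R/u^\star-1)$, and simplifying with $\epsilon_1=\sqrt{\epsilon e^R/(\gamma_2\gamma_1)}$, reproduces exactly $(P_0^{\rm I},P_1^{\rm I},P_2^{\rm I})$ and $(P_0^{\rm II},P_1^{\rm II},P_2^{\rm II})$, while the algebra of the preceding paragraph yields $(P_0^{\rm III},P_1^{\rm III},P_2^{\rm III})$. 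This establishes the three stated forms.

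The main difficulty I expect is bookkeeping rather than any hard estimate: one must keep track of the feasibility conditions that make each candidate genuinely belong to its pattern --- e.g.\ $u^\star\ge1$ so that $P_0\ge0$, and, for Type I, that the minimizer indeed satisfies $P_0<P_1$ with \eqref{3tst:3} slack, for Type II that \eqref{3tst:2} stays slack, and for Type III that $\gamma_1-\epsilon\gamma_0>0$. When the unconstrained scalar minimizer leaves its region, the optimum migrates to the boundary $u=1$, i.e.\ $P_0=0$, which is the pure OMA/NOMA degeneracy already treated in Lemmas \ref{lemma1}--\ref{lemma3}; so in the H-NOMA regime the interior stationary points above are the only candidates. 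The explicit parameter inequalities deciding which of the three forms actually solves \eqref{pb:3} are exactly those recorded when the optimal mode is identified, and are not needed for the present lemma, which only asserts the list of possible forms.
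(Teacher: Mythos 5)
Your proposal is correct, and it reaches the three closed-form solutions by a genuinely different route from the paper. The paper works on the dual side: it writes the Lagrangian of \eqref{pb:3} (with the nonnegativity constraints dropped since $P_0,P_2\neq 0$ by assumption), imposes stationarity, derives the key identity $\gamma_0\gamma_2\lambda_1^2/(\epsilon\gamma_0\lambda_2+1-\gamma_1\lambda_2)=e^R$ from complementary slackness on \eqref{3tst:1}, and then enumerates the multiplier patterns $\lambda_3=0$, $\lambda_3\neq 0$ with $\lambda_2=0$, and $\lambda_3\neq0$ with $\lambda_2\neq0$, solving for the multipliers and the powers simultaneously. You instead enumerate the primal active sets: a clean perturbation argument shows \eqref{3tst:1} is always tight (otherwise reduce $P_2$) and that at least one of \eqref{3tst:2}, \eqref{3tst:3} is tight (otherwise reduce $P_1$, noting $P_1\geq0$ can never bind), which yields exactly the three patterns of Table~\ref{table1}; each pattern then collapses to minimizing $au+b/u$ in $u=1+\gamma_0P_0$, whose minimizer $\sqrt{b/a}$ reproduces Types I and II, while Type III is fully determined by the two linear equalities. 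I verified that your $u^\star$ values and back-substitutions give exactly $(P_0^{\rm I},P_1^{\rm I},P_2^{\rm I})$, $(P_0^{\rm II},P_1^{\rm II},P_2^{\rm II})$ and $(P_0^{\rm III},P_1^{\rm III},P_2^{\rm III})$. The trade-off: your argument is more elementary and makes the case enumeration transparent (it explains \emph{why} there are exactly three H-NOMA forms, which the paper's multiplier split leaves implicit), but it does not produce the closed-form Lagrange multipliers, which the paper computes as a byproduct and then uses as the KKT certificate to decide which of the five candidate solutions is actually optimal; if you intend to carry out that downstream selection step, you would still need to recover $\lambda_1,\lambda_2,\lambda_3$ (e.g., from your tight-constraint identities), so the two approaches end up complementary rather than one subsuming the other.
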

\begin{proof}
See Appendix \ref{proof4}. 
\end{proof}

Recall the fact that problem \eqref{pb:3} is convex, and hence its optimal solution has to satisfy  the Karush–Kuhn–Tucker (KKT) conditions \cite{Boyd}. Therefore,  
the optimal solution of problem \eqref{pb:3} can be obtained by using the KKT conditions as the certificate and selecting  the optimal solution from  the two pure NOMA solutions shown in Lemma \ref{lemma3} and the three H-NOMA solutions in Lemma \ref{lemma4}. 
 
{\it Remark 6:} As shown in Lemmas \ref{lemma3} and \ref{lemma4}, even for the simple two-user special case, it is challenging to obtain a concise  closed-form expression for the optimal solution of problem \eqref{pb:1}. Therefore, the next section is provided to show how to obtain the optimal solution of problem \eqref{pb:1} in an algorithmic manner.

 
\section{Multi-User Power Minimization for BackCom Assisted H-NOMA}\label{section 4}
In this section,  the general multi-user power minimization problem shown in \eqref{pb:1} is focused. In particular,    a low-complexity SCA based algorithm is developed first to yield  a sub-optimal solution of problem \eqref{pb:1}, followed by a BB-based algorithm which   leads to the optimal solution. 

\subsection{SCA-Based Resource Allocation }
Problem \eqref{pb:1} can be first recast in the following more compact form:
  \begin{problem}\label{pb:4} 
  \begin{alignat}{2}
\underset{P_{m}, \eta_{mi}\geq0  }{\rm{min}} &\quad   \sum^{M}_{m=1} P_m \label{4tst:1}
\\ s.t. &\quad    \sum^{m}_{i=1} \log\left(
1+\frac{\gamma_{mi}  {\eta_{mi}P_i} }{\sum^{M}_{j=m+1}\gamma_{ji}  {\eta_{ji}P_i} +1}
\right) \geq R,  \nonumber 
\\&\quad  1\leq m \leq M, \label{4tst:2} 
\\ &\quad  \eta_{mi}\leq 1,   1\leq i \leq m-1,  1\leq m \leq M,
  \end{alignat}
\end{problem} 
where $\gamma_{mi}=|h_m|^2|g_{mi}|^2$, except for the case of $\gamma_{mm}=|h_m|^2$.

The multiplications between the reflecting coefficients and power allocation parameters make  the problem complicated. Inspired by the discussions provided to the two-user special case, problem \eqref{pb:4} can be recast    as follows:
  \begin{problem}\label{pb:5} 
  \begin{alignat}{2}
\underset{P_{mi}\geq0  }{\rm{min}} &\quad   \sum^{M}_{m=1} P_{mm} \label{5tst:1}
\\ s.t. &\quad   \sum^{m}_{i=1}\log\left(
1+\frac{\gamma_{mi}  {P_{mi} } }{\sum^{M}_{j=m+1}\gamma_{ji}  {P_{ji}} +1}
\right) \geq R,  \nonumber \\&\quad  1\leq m \leq M, \label{5tst:2} 
\\ &\quad  P_{mi}\leq P_{ii},   1\leq i \leq m-1,  1\leq m \leq M, \label{5tst:3} 
  \end{alignat}
\end{problem} 
where $P_{mi} = \eta_{mi}P_i$, and $P_{mm}=P_m$. The new constraint \eqref{5tst:3} is required to ensure $\eta_{mi}\leq 1$. 
 
In \cite{hnomadown,9679390}, a low-complexity Pareto-optimal algorithm based successive resource allocation was developed for conventional H-NOMA. We note that such a successive resource allocation approach is no longer applicable to BackCom assisted H-NOMA, which is due to the fact that   problem \eqref{pb:5} is  more complicated   than those  conventional H-NOMA ones. For example, for conventional H-NOMA, constraint \eqref{5tst:3}  is not needed. As a result, it is possible to optimize one user's parameters  without knowing the other users' choices. However, with constraint \eqref{5tst:3}, ${\rm U}_m$'s transmit power during the $i$-th time slot is affected by ${\rm U}_i$'s choice of $P_{ii}$. 

It is important to point out that the non-convexity of problem \eqref{pb:5} is caused by   constraint \eqref{5tst:2} which is the difference between two concave functions. Recall that the difference between two concave functions can be approximated by applying SCA. To facilitate the implementation of SCA, problem \eqref{pb:5} can be first rewritten as follows: 
  \begin{problem}\label{pb:z1} 
  \begin{alignat}{2}
\underset{P_{mi}\geq0  }{\rm{min}} &\quad   \sum^{M}_{m=1} P_{mm} \label{z1tst:1}
\\ s.t. &\quad   \sum^{m}_{i=1}\log\left(1+
  \sum^{M}_{j=m}\gamma_{ji}  {P_{ji}}  
\right)-I_{mi} \geq R,    1\leq m \leq M, \label{z1tst:2} 
\\ &\quad  \eqref{5tst:3} ,
  \end{alignat}
\end{problem} 
where $
I_{mi} = \log\left(
  \sum^{M}_{j=m+1}\gamma_{ji}  {P_{ji}} +1
\right)$. 

The term $I_{mi}$ is a concave function, but  can be approximated as an affine function by using SCA. In particular, SCA can be implemented in an iterative manner, where during the $k$-th iteration, $I_{mi}$ can be approximated as the following affine function via the Taylor  approximation:
\begin{align}
I_{mi} =& \log\left(
  \sum^{M}_{j=m+1}\gamma_{ji}  {P_{ji}} +1
\right) = \log\left(
\mathbf{h}_{mi}^T\mathbf{p}+1
\right)\\\nonumber
\approx&\log\left(
\mathbf{h}_{mi}^T\mathbf{p}^{k-1}+1
\right)+ \frac{\mathbf{h}_{mi}^T(\mathbf{p}-\mathbf{p}^{k-1})}{\mathbf{h}_{mi}^T\mathbf{p}^{k-1}+1}
\end{align}
where $\mathbf{p}_m = \begin{bmatrix}
P_{m1}&\cdots&P_{mm}
\end{bmatrix}^T$, $\mathbf{p} = \begin{bmatrix}
\mathbf{p}_1^T &\cdots&\mathbf{p}_M^T 
\end{bmatrix}^T$, $\mathbf{p}^{k-1}$ is obtained from the previous iteration,    $\mathbf{h}_m = \begin{bmatrix}
\gamma_{m1}&\cdots&\gamma_{mm}
\end{bmatrix}^T$, $\mathbf{h} = \begin{bmatrix}
\mathbf{h}_1^T &\cdots&\mathbf{h}_M^T 
\end{bmatrix}^T$,  $\mathbf{h}_{mi} $ is structured similar to   $\mathbf{h}$ by including   $\gamma_{ji}$, $j\in \{m+1,\cdots,  M\}$, only.

\subsection{BB-Based Resource Allocation}
To characterize  the fundamental limits of BackCom assisted H-NOMA as well as the performance of the proposed SCA algorithm, it is important to develop an algorithm which yields the optimal solution of problem \eqref{pb:1}. 
A useful  observation from \eqref{pb:5} is that the objective function of the considered optimization problem is a  simple summation of $M$ optimization  parameters, which can be viewed as the vertex of  a multi-dimensional rectangle. This observation motivates the application of the BB algorithm which is to  solve the following optimization problem:
  \begin{problem}\label{pb:6} 
  \begin{alignat}{2}
\underset{  }{\rm{min}} &\quad   \sum^{M}_{m=1} P_{mm} \label{6tst:1}  &\quad s.t.  \quad  P_{mm}\in \mathcal{G},
  \end{alignat}
\end{problem}
where $\mathcal{G}=\{P_{mm}, 1\leq m \leq M: \eqref{5tst:2}, \eqref{5tst:3}\}$. 
 
 The BB algorithm can be viewed as a structured exhaustive search and needs to be implemented iteratively.  In brief, during the $i$-th iteration, an $M$-dimensional rectangle which is likely to contain the optimal solution, denoted by $\mathcal{R}_i$, is selected and split into two smaller rectangles, denoted by $\mathcal{R}_i^{1}$ and $\mathcal{R}_i^{2}$, respectively. The key step of each iteration is to find the upper and lower bounds on the optimal value of problem \eqref{pb:6} of the two rectangles, which are denoted by $\psi^{\rm U}\left(\mathcal{R}_i^{j}\right)$ and $\psi^{\rm L}\left(\mathcal{R}_i^{j}\right)$, $j\in \{\rm L,U\}$, respectively, and will be discussed later. At the end of each iteration, the global  upper and lower bounds, denoted by $U$ and $L$, respectively, are updated by using the existing rectangles,  i.e., $U=\min \psi^{\rm U}(\mathcal{R})$ and   $L=\min \psi^{\rm L}(\mathcal{R})$, $\forall \mathcal{R}\in \mathcal{C}$, where $\mathcal{C}$ denotes the set containing all possible rectangles. Those rectangles whose lower bounds are larger than the global upper bound need to be pruned. The details for the implementation of the BB algorithm are provided in Algorithm  \ref{algorithm}, where $N_{\max}$ denotes the maximal number of iterations.

 \begin{algorithm}[t]
\caption{Branch and Bound Algorithm}

 \begin{algorithmic}[1]
 
\State Initial rectangle set:   $\mathcal{C}$; find   $U=\psi^{\rm U}(\mathcal{C})$, $L=\psi^{\rm L}(\mathcal{C})$, $n=0$  
\If { $U-L\leq \xi$ or $n\leq N_{\rm max}$ }
\State $n=n+1$. 
\State  Find a rectangle $\mathcal{R}_i\in \mathcal{C} $ based on $\underset{i}{\arg}\min \psi^{\rm L}(\mathcal{R}_i)$ 
\State Split $\mathcal{R}_i$ along its longest edge into $\mathcal{R}^1_i$ and $\mathcal{R}^2_i$

\State Update  $\mathcal{C}$ by adding $\mathcal{R}^1_i$ and $\mathcal{R}^2_i$ and removing  $\mathcal{R}_i$

\State Update the global upper and lower bounds:   $U=\min \psi^{\rm U}(\mathcal{R})$ and   $L=\min \psi^{\rm L}(\mathcal{R})$, $\forall \mathcal{R}\in \mathcal{C} $.

\State Prune  the rectangles in $\mathcal{C}$ whose lower bounds are larger  than $U$.

 \EndIf
\State \textbf{end}
 \end{algorithmic}\label{algorithm}
\end{algorithm}  

As shown in Algorithm \ref{algorithm}, the most important component of a BB algorithm is   the functions, $\psi^{\rm U}(\mathcal{R})$ and $\psi^{\rm L}(\mathcal{R})$, i.e., given a rectangle, $\mathcal{R}$, how to find its  upper and lower bounds on the optimal value of the formulated optimization problem. An important observation is   that the objective function of the considered optimization problem is a monotonically increasing  function of $P_{mm}$, $1\leq m \leq M$, which motivates the use of   the  minimum and  maximum  vertices  of the rectangle, denoted by $\mathbf{p}_{\min}^{\mathcal{R}}$ and $\mathbf{p}_{\max}^{\mathcal{R}}$, respectively, for finding the two bounds, as follows:
\begin{align}
\left\{\begin{array}{l}  \psi^{\rm U}(\mathcal{R}) =\mathbf{1}_M^T \mathbf{p}_{\max}^{\mathcal{R}} \\
\psi^{\rm L}(\mathcal{R}) =\mathbf{1}_M^T \mathbf{p}_{\min}^{\mathcal{R}}
 \end{array}\right. ,
\end{align}
if $\mathbf{p}_{\max}^{\mathcal{R}}$ is feasible, i.e., $ \mathbf{p}_{\max}^{\mathcal{R}}\in \mathcal{G}$,  otherwise $\psi^{\rm U}(\mathcal{R}) =\psi^{\rm L}(\mathcal{R})=\Delta$,
where $\mathbf{1}_M$ is an $M\times 1$ all-one vector and $\Delta$ is a constant with a large value. Because OMA is a special case of H-NOMA, the overall power allocation of OMA can be used as $\Delta$.  

We note that checking the feasibility of $\mathbf{p}_{\max}^{\mathcal{R}}$ is equivalent to solve the following feasibility optimization problem:
  \begin{problem}\label{pb:7} 
  \begin{alignat}{2}
\underset{ \eta_{mi}\geq0  }{\rm{min}} &\quad  1
\\ s.t. &\quad   \sum^{m}_{i=1} \log\left(
1+\frac{\gamma_{mi}  {\eta_{mi}P_{ii}^{\max}} }{\sum^{M}_{j=m+1}\gamma_{ji}  {\eta_{ji}P_{ii}^{\max}} +1}
\right) \geq R, \\ \nonumber &\quad\quad\quad\quad  1\leq m \leq M, \label{1tst:7} 
\\ &\quad  \eta_{mi}\leq 1,   1\leq i \leq m-1,  1\leq m \leq M,
  \end{alignat}
\end{problem} 
where $P_{ii}^{\max}$ is the $i$-th element of $\mathbf{p}_{\max}^{\mathcal{R}}$.

Comparing problem \eqref{pb:7} to the original formulation in \eqref{pb:5}, the number of optimization variables is reduced since $P_{ii}^{\max}$ is fixed. However, problem \eqref{pb:7} is still a non-convex optimization problem, and hence challenging to solve.  In the following subsections, two approaches are introduced  to solve the feasibility problem in \eqref{pb:7}.

\subsubsection{An intuitive approach based SCA}\label{section scax}
The non-convexity feature of problem \eqref{pb:7} can be illustrated by recast it as follows: 
  \begin{problem}\label{pb:8} 
  \begin{alignat}{2}
\underset{P_{mi}\geq0  }{\rm{min}} &\quad   1 \label{1tst:8}
\\ s.t. &\quad   \sum^{m}_{i=1}\log\left(1+
  \sum^{M}_{j=m} \gamma_{ji} P_{ii}^{\max} {\eta_{ji}}
\right)-\tilde{I}_{mi} \geq R,  \nonumber \\ &\quad\quad\quad\quad  1\leq m \leq M, \label{8tst:2} 
\\ &\quad  \eta_{mi}\leq 1,   1\leq i \leq m-1,  1\leq m \leq M,
  \end{alignat}
\end{problem} 
where $
\tilde{I}_{mi} = \log\left(
  \sum^{M}_{j=m+1}\gamma_{ji} P_{ii}^{\max} {\eta_{ji}} +1
\right)$.  The left hand side of \eqref{8tst:2}  is the difference between two concave functions, which is not convex.  Hence, the overall optimization problem is not convex. 

An intuitive approach is to again apply SCA and convert the left-hand side    of \eqref{pb:8} to the difference between a concave function and an affine function. In particular, during the $k$-th iteration of SCA, the term $\tilde{I}_{mi}$ can be approximated as the following  affine function by using Taylor  approximation:
\begin{align}\nonumber 
\tilde{I}_{mi} =& \log\left(
  \sum^{M}_{j=m+1}\gamma_{ji} P_{ii}^{\max} {\eta_{ji}}  +1
\right) = \log\left(
{\boldsymbol \gamma}_{mi}^T{\boldsymbol \eta}+1
\right)\\ 
\approx&\log\left(
{\boldsymbol \gamma}_{mi}^T{\boldsymbol \eta}^{k-1}+1
\right)+ \frac{{\boldsymbol \gamma}_{mi}^T({\boldsymbol \eta}-{\boldsymbol \eta}^{k-1})}{{\boldsymbol \gamma}_{mi}^T{\boldsymbol \eta}^{k-1}+1},
\end{align}
where ${\boldsymbol \eta}_m = \begin{bmatrix}
\eta_{m1}&\cdots&P_{m(m-1)}
\end{bmatrix}^T$, ${\boldsymbol \eta}= \begin{bmatrix}
{\boldsymbol \eta}_1^T &\cdots&{\boldsymbol \eta}_M^T 
\end{bmatrix}^T$,  ${\boldsymbol \eta}^{k-1}$ is obtained from the previous iteration,   ${\boldsymbol \gamma}_m = \begin{bmatrix}
\gamma_{m1} P_1&\cdots&\gamma_{m(m-1)} P_{m-1}
\end{bmatrix}^T$, ${\boldsymbol \gamma} = \begin{bmatrix}
{\boldsymbol \gamma}_1^T &\cdots&{\boldsymbol \gamma}_M^T 
\end{bmatrix}^T$,  ${\boldsymbol \gamma}_{mi} $ is structured similar to   ${\boldsymbol \gamma}$ by  including     $\gamma_{ji}P_{ii}^{\max}$, $j\in \{m+1, M\}$, only.

While the use of  SCA   is intuitive, our simulation results indicate that the SCA approach  leads to high computational complexity and also  poor performance, which motivates the following alternative approach. 

\subsubsection{Successive resource allocation}\label{section sra} The feature of H-NOMA can be used to design a more efficient feasibility algorithm, as shown in the following. In particular, recall that in each time slot, the signals from ${\rm U}_M$ are decoded during the last SIC stage, which means that ${\rm U}_M$'s signals do not suffer interference from the other users. This observation can be illustrated by excluding ${\rm U}_m$'s constraint, $1\leq m \leq M-1$ and rewriting problem \eqref{pb:7} as follows:
  \begin{problem}\label{pb:9} 
  \begin{alignat}{2}
\underset{ \eta_{Mi}\geq0  }{\rm{min}} &\quad   1  \label{9tst:1}
\\ s.t. &\quad   \sum^{M}_{i=1}  \log\left(
1+ |h_M|^2|g_{Mi}|^2  {\eta_{Mi}P_{ii}^{\max}} 
\right)\geq R    \label{9tst:2} 
\\ &\quad \sum^{M}_{i=1}\eta_{Mi}\leq \eta.
  \end{alignat}
\end{problem} 
As shown in \eqref{pb:9}, ${\rm U}_M$'s parameters can be optimized without knowing the other users' choices of $\eta_{m,i}$, $1\leq m \leq M-1$. In addition, problem \eqref{pb:9} is   a convex optimization problem, and can be efficiently solved by those off-shelf optimization solvers. 

Once ${\rm U}_M$'s parameters are optimized, ${\rm U}_{M-1}$'s can then be optimized, by treating $\eta_{Mi}$ as fixed. Similarly, all the users' reflecting coefficients can be obtained  efficiently in a successive manner, i.e., during the $m$-th stage, the following convex optimization problem needs to be solved:
 \begin{problem}\label{pb:10} 
  \begin{alignat}{2}
\underset{ \eta_{mi}\geq0  }{\rm{min}} &\quad  1
\\ s.t. &\quad   \sum^{m}_{i=1} \log\left(
1+\frac{\gamma_{mi}  {\eta_{mi}P_{ii}^{\max}} }{\sum^{M}_{j=m+1}\gamma_{ji}  {\eta_{ji}P_{ii}^{\max}} +1}
\right) \geq R,   \label{1tst:10} 
\\ &\quad  \eta_{mi}\leq 1,   1\leq i \leq m-1,   
  \end{alignat}
\end{problem} 
by assuming that the parameters of ${\rm U}_j$, $m+1\leq j\leq M$, are fixed. 

{\it Remark 7:} Compared to the SCA based feasibility approach, the computational complexity of  the proposed  one  is smaller, due to  the following two reasons. One reason  is that the successive resource allocation based  approach can be implemented without iterations as required by SCA. The other is that   each subproblem shown in \eqref{pb:10} is convex and the number of its optimization variables is simply $m-1$, whereas the dimension of the optimization vector for the SCA problem in \eqref{pb:8} is $N_{\rm SCA} = \frac{(M-1)M}{2}$. Therefore, the use of  the interior point method to solve the SCA based feasibility problem leads to the computational complexity proportional to $\mathcal{O}( N_{\rm SCA}^{3.5})$  \cite{scacomplexity}. If  successive resource allocation is used, the computational complexity for solving the feasibility problem is reduced to   $\sum^{M}_{m=1}\mathcal{O}( m^{3.5})$, which can be much smaller that of SCA, particularly for large $M$.  

\section{Numerical Studies} \label{section 5}
 In this section, the performance of BackCom assisted H-NOMA is studied by using computer simulation results. For all the conducted simulations, the $M$ users are assumed to be randomly clustered within a square with its side length denoted by  $r_u$ and its center located at  $(r_c, r_c)$ m, where the base station is located at the origin.   The channels between the users and the base station are subject to both Rayleigh fading  and large scale path loss. The channels between the users are subject to both Rician fading and large scale path loss. The maximal number of iterations for the implementation of the BB algorithm is set as $N_{\rm max}=1000$. 

In Figs. \ref{fig1} and \ref{fig2}, the two-user special case discussed in Section  \ref{section 3} is focused on. We note that for this special case,  the optimal solution for the considered power minimization problem can be obtained. In particular, for the two figures, the analytical results   are obtained by combining Lemmas \ref{lemma3} and \ref{lemma4}, where the simulation results  are obtained by applying the off-shelf optimization solvers. As can be seen from   both the   figures, the 
analytical results perfectly match the simulation results, which verifies the accuracy of the developed analytical results. 

In both  Figs. \ref{fig1} and \ref{fig2}, the performance of OMA is also shown as a benchmarking scheme. As can be seen from the figures, the use of H-NOMA can significantly reduces the energy consumption of OMA, particularly if the target data rate is large. Comparing Fig. \ref{fig1a} to Fig. \ref{fig1b}, one can observe that the case with a smaller choice of $r_u$ yields a large performance gap between OMA and H-NOMA, which can be explained in the following. For the case with  smaller $r_u$, the users are clustered in a smaller region, which makes the inter-user channels stronger, i.e., the cooperation among the users becomes more meaningful.  Comparing Fig. \ref{fig1} to Fig. \ref{fig2}, one can observe that the less noisy the environment is,  the less power consumption is required, where H-NOMA always outperforms OMA regardless of the choices of the system parameters.

     \begin{figure}[t] \vspace{-0em}
\begin{center}
\subfigure[$r_u = 2$ m ]{\label{fig1a}\includegraphics[width=0.38\textwidth]{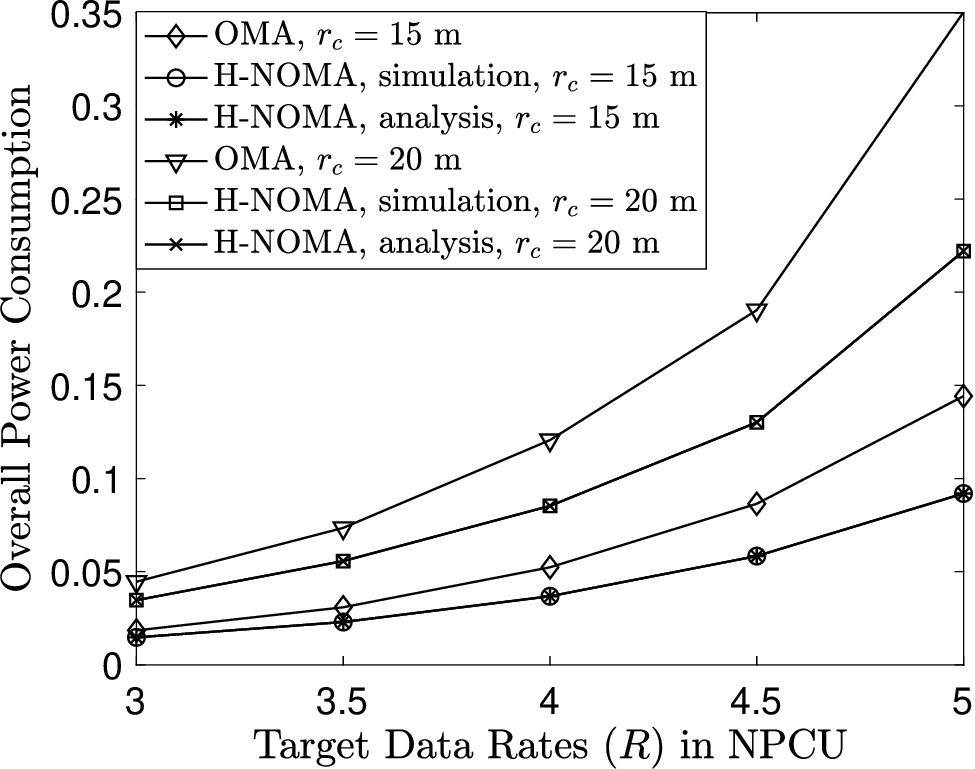}} 
\subfigure[$r_u = 5$ m]{\label{fig1b}\includegraphics[width=0.38\textwidth]{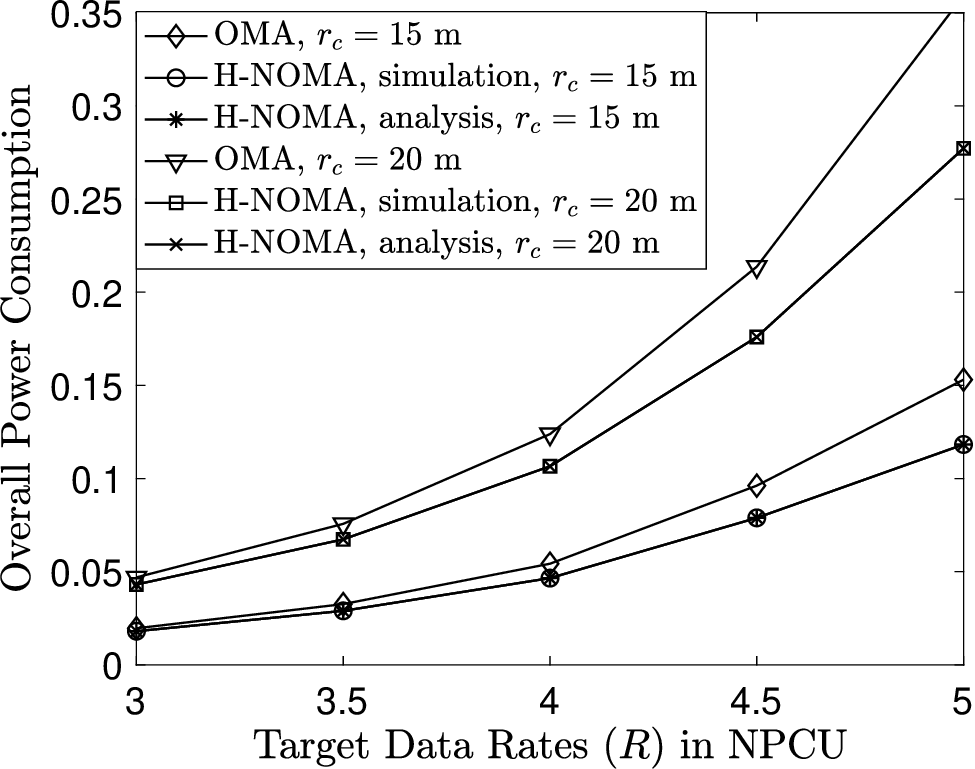}} \vspace{-1em}
\end{center}
\caption{Overall power consumption required by  the considered transmission schemes, where $M=2$,   the noise power is set as $\sigma^2=10^{-8}$, and NPCU denotes nats per channel use.        \vspace{-1em} }\label{fig1} 
\end{figure}
     \begin{figure}[t] \vspace{-0em}
\begin{center}
\subfigure[$r_u = 2$ m ]{\label{fig2a}\includegraphics[width=0.38\textwidth]{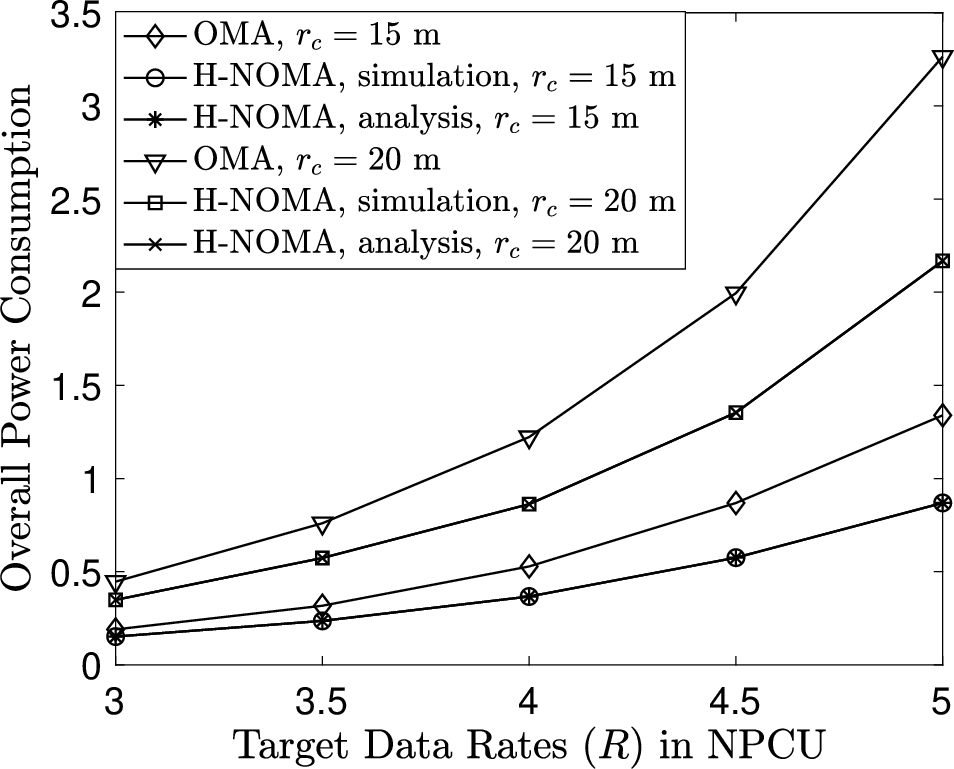}} 
\subfigure[$r_u = 5$ m]{\label{fig2b}\includegraphics[width=0.38\textwidth]{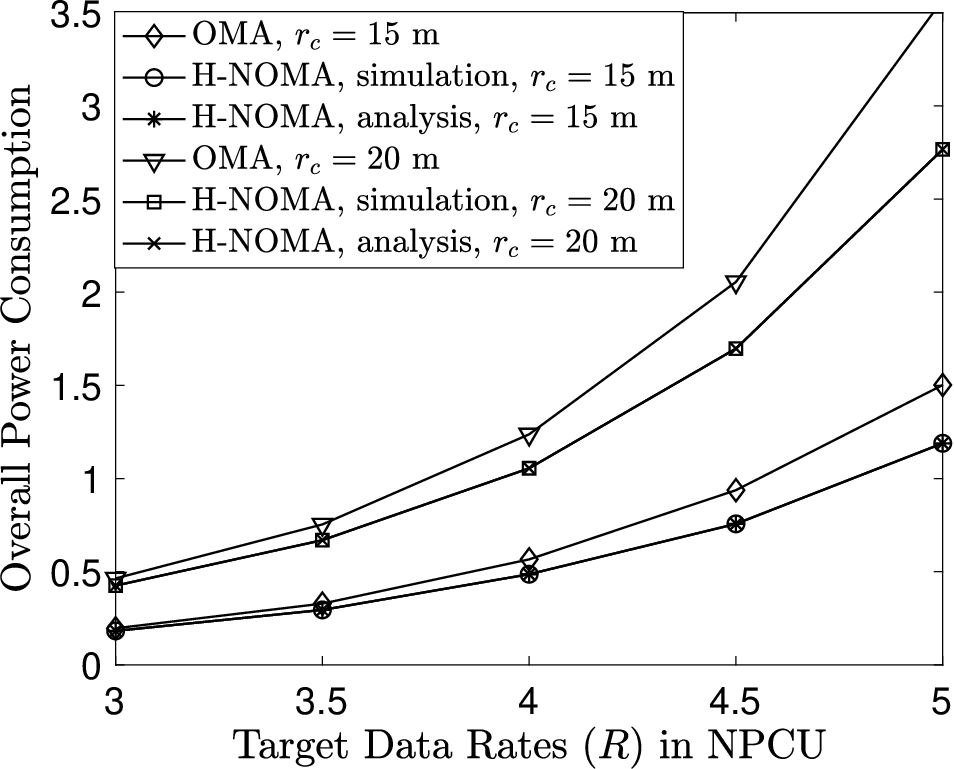}} \vspace{-1em}
\end{center}
\caption{Overall power consumption required by  the considered transmission schemes, where $M=2$,   and the noise power is set as $\sigma^2=10^{-7}$.       \vspace{-1em} }\label{fig2} 
\end{figure}

As discussed in Section \ref{section 3}, it is more challenging to carry out   resource allocation for BackCom assisted H-NOMA than conventional H-NOMA. In particular, Lemmas \ref{lemma3} and \ref{lemma4} show that there are potentially  five forms of the optimal solution of problem \eqref{pb:3}, including the two pure NOMA solutions shown in Lemma \ref{lemma3} and the three H-NOMA solutions shown in Lemma \ref{lemma4}. Fig. \ref{fig3} shows the probability for these potential forms to be used as the optimal solution. For the case with small $R$, the dominant solution is H-NOMA Type III, i.e., $P_i^{\rm III}$ shown in Lemma \ref{lemma4}. For the case with large $R$, the dominant solution is H-NOMA Type I, i.e., $P_i^{\rm I}$ shown in Lemma \ref{lemma4}. Recall that the key difference between the two types of H-NOMA solutions is whether $P_0=P_1$, i.e., whether ${\rm U}_2$'s reflecting coefficient is $one$. Therefore, an important observation from Fig. \ref{fig3} is that a user does not have to reflect in a `full blasting' manner if $R$ is small. Another interesting observation from Fig. \ref{fig3} is that the pure NOMA mode is possible to be used if $R$ is small, and almost never be used for the case with large $R$.

     \begin{figure}[t]\centering \vspace{-0.2em}
    \epsfig{file=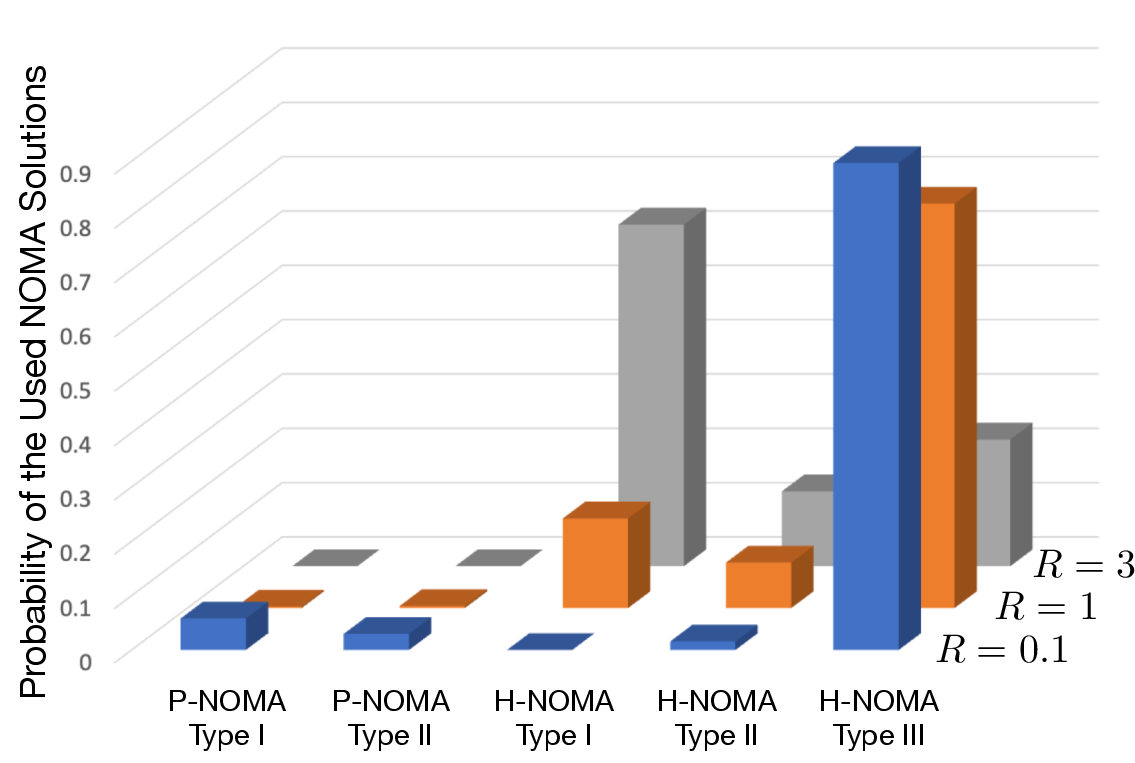, width=0.4\textwidth, clip=}\vspace{-0.5em}
\caption{Probability of the five NOMA solutions to become the optimal solution, where the two pure NOMA solutions from Lemma \ref{lemma3}, denoted by P-NOMA Types I and II, respectively, and the three H-NOMA solutions from Lemma \ref{lemma4}, denoted by H-NOMA Types I  II and III, respectively,   are used.   $M=2$, $\sigma^2=10^{-8}$, $r_u=2$ m,  and $r_c=15$ m. 
  \vspace{-1em}    }\label{fig3}   \vspace{-0.5em} 
\end{figure}

     \begin{figure}[t] \vspace{-0em}
\begin{center}
\subfigure[$r_c = 15$ m ]{\label{fig4a}\includegraphics[width=0.38\textwidth]{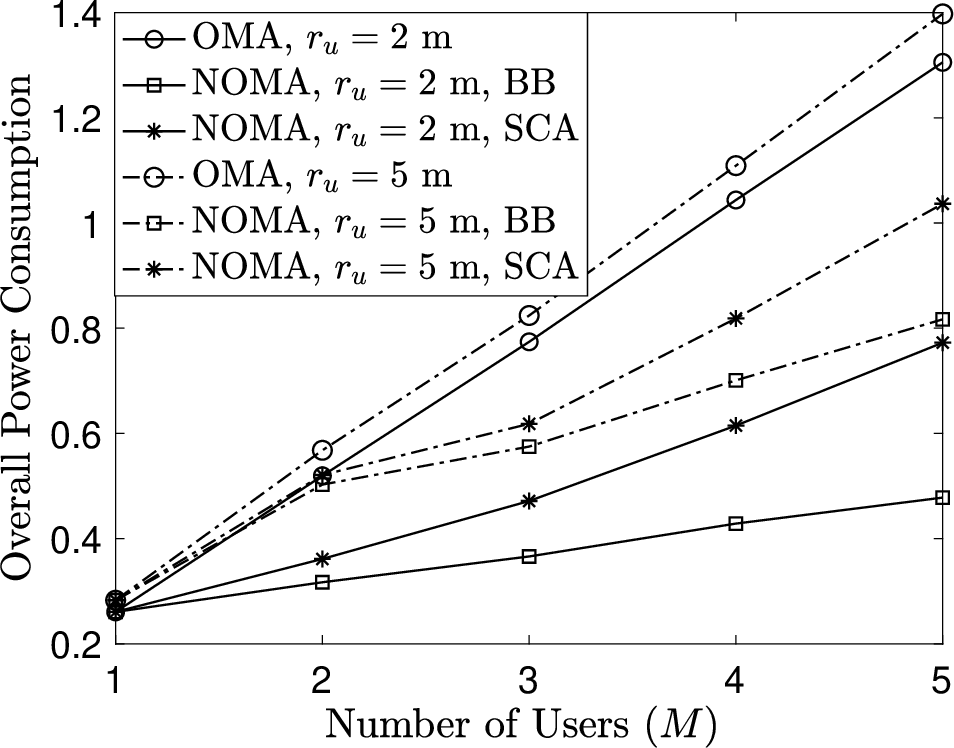}} 
\subfigure[$r_c = 20$ m]{\label{fig4b}\includegraphics[width=0.38\textwidth]{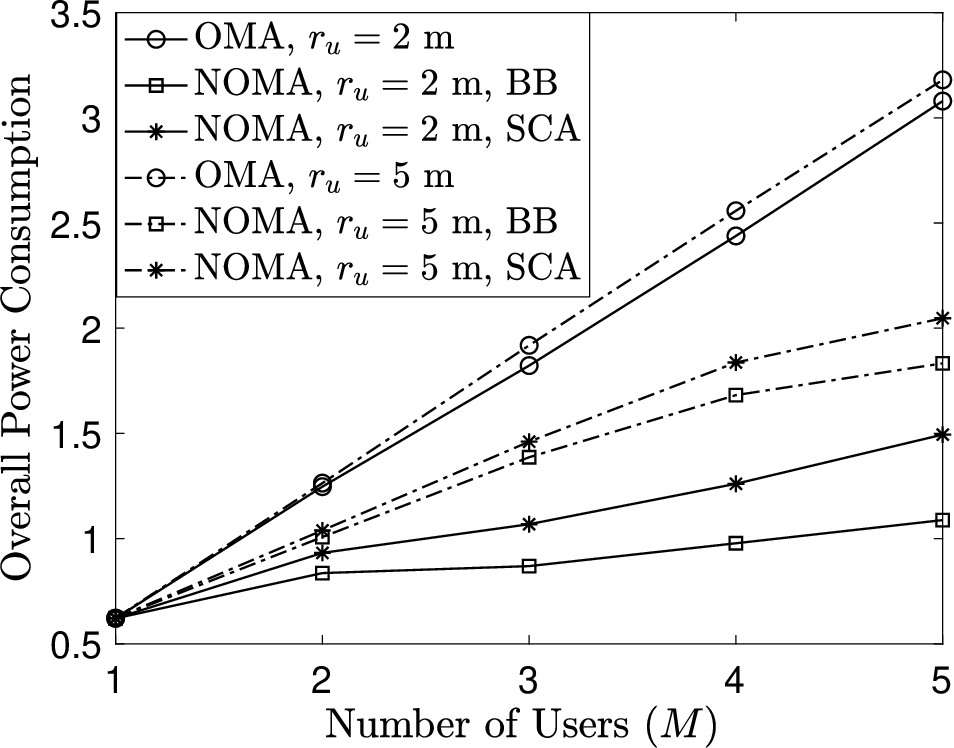}} \vspace{-1em}
\end{center}
\caption{Overall power consumption required by  the considered transmission schemes in the multi-user scenario, where   $R=4$ NPCU and the noise power is set as $\sigma^2=10^{-8}$.      \vspace{-1em} }\label{fig4} 
\end{figure}

In Fig. \ref{fig4}, the performance of BackCom assisted H-NOMA is evaluated in the general multi-user scenario, where the  successive resource allocation based BB algorithm shown in Section \ref{section sra} is used. As can be seen from  the two figures in Fig. \ref{fig4}, the H-NOMA schemes always outperform the OMA benchmarking scheme. An important observation from Fig. \ref{fig4} is that the gap between OMA and H-NOMA increases with the growth of $M$, which can be explained as follows. For the case of $M=1$, there is a single user in the cluster, and hence H-NOMA is degraded to OMA. With more users participating  into cooperation, there are   larger degrees of freedom for resource allocation. For example, for the case of $M=2$, ${\rm U}_M$ has the choice to transmit in two time slots only, whereas for the case of $M=5$, ${\rm U}_M$ has the choice to transmit in five time slots. Even though  ${\rm U}_M$ might not choose to transmit in all the five time slots, there is more flexibility for resource allocation, which leads to a large performance gain  over OMA.

Fig. \ref{fig4} also illustrates that the performance gap between OMA and H-NOMA becomes larger if $r_u$ is smaller, which is consistent to the observation from Figs. \ref{fig1} and \ref{fig2}. It is interesting to observe that a change of $r_u$ does not bring much performance change to OMA, since the OMA performance is mainly affected by $r_c$, i.e., the distance between the base station and the center of the cluster. Recall that the motivation to develop the SCA algorithm is that it achieves a balanced tradeoff between the system performance and complexity. This is the reason why Fig. \ref{fig4}   shows that the use of SCA might cause a performance loss compared to the BB algorithm. It is worth to point out that with a smaller number of users,  SCA can realize a performance close to the optimal performance; however, an increase of $M$ introduces a larger performance gap between the BB and SCA algorithms.   Fig. \ref{fig5} shows the overall power consumption as a function of the target data rate. As can be observed from the figure, by increasing the target data rate, the performance gap between OMA and H-NOMA is increased. For example, with $R=5$ nats per channel use (NPCU), the use of OMA leads to the power consumption which is almost  three times that of H-NOMA.

     \begin{figure}[t]\centering \vspace{-0.2em}
    \epsfig{file=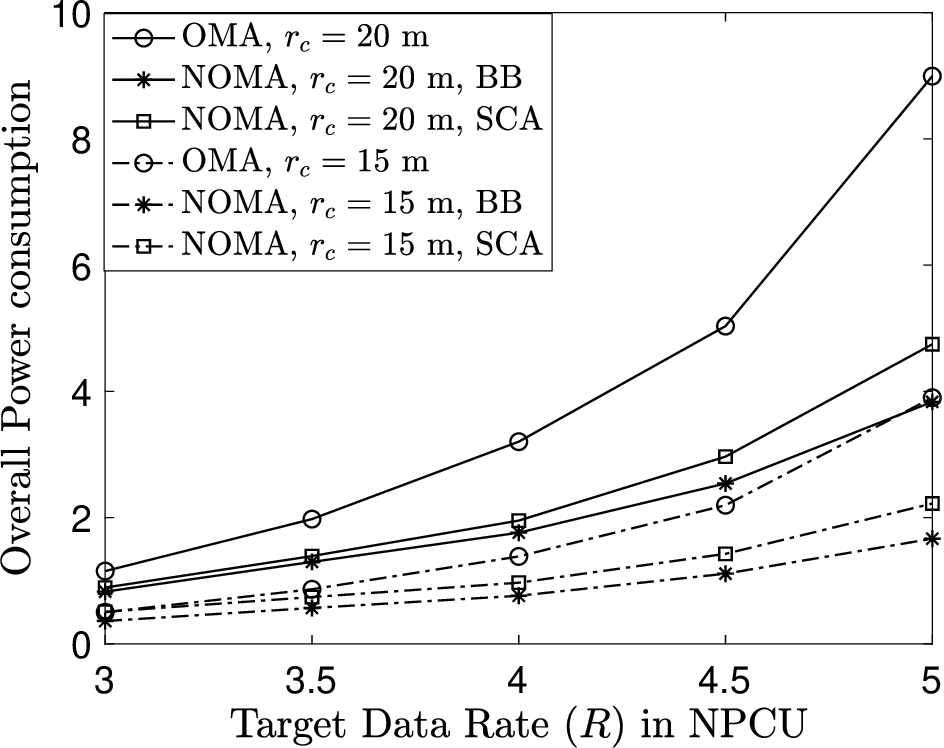, width=0.38\textwidth, clip=}\vspace{-0.5em}
\caption{Overall power consumption required by  the considered transmission schemes as a function of the target data rate, where   $M=5$, the noise power is set as $\sigma^2=10^{-8}$, and $r_u=5$ m.
  \vspace{-1em}    }\label{fig5}   \vspace{-0.5em} 
\end{figure}

     \begin{figure}[t]\centering \vspace{-0.2em}
    \epsfig{file=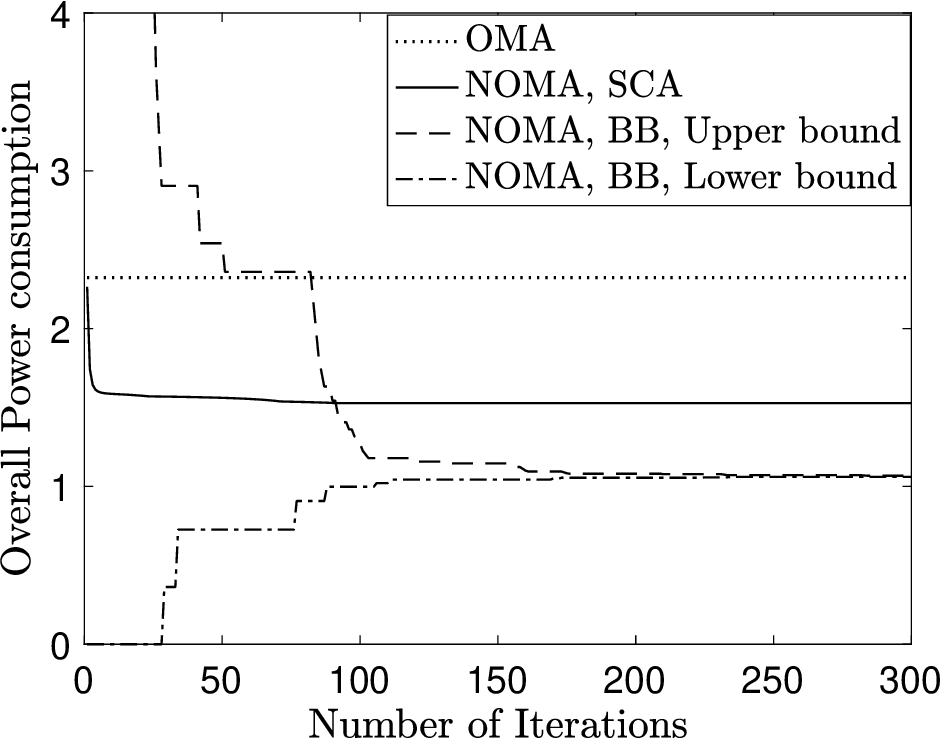, width=0.38\textwidth, clip=}\vspace{-0.5em}
\caption{Convergence of the SCA and BB algorithms, where   $M=5$, the noise power is set as $\sigma^2=10^{-8}$, $R=4$ NPCU,  $r_u=5$ m, and $r_c=20$ m.
  \vspace{-1em}    }\label{fig6}   \vspace{-0.5em} 
\end{figure}

\begin{table}[t]\vspace{-0em}
  \centering
  \caption{Running Time Required by   SCA and SRA Based BB Algorithms (in seconds)  }
  \begin{tabular}{|c|c|c|c|c|c|}
\hline
&    $M=1$& $M=2$  & $M=3$ &$M=4$ &$M=5$   \\
    \hline
    SCA  &$   0.2393$ 	&$5.8233$  	&$45.3231$  	&$100.3418$  	&$110.9291$  
    \\
    \hline
    SRA &$ 0.3271$  	&$49.8945$  	&$555.9374$  &	$698.0526$  &	$630.5155$   \\\hline
  \end{tabular}\vspace{1em}\label{table2}\vspace{-1em}
\end{table}  

The convergence of the proposed SCA and BB algorithms is studied in Fig. \ref{fig6}. As can be seen from the figure, the BB algorithm   achieves better performance than SCA, but requires a significantly large number of iterations to converge, i.e., the computational complexity of SCA is much smaller than that of the BB algorithm.  As discussed in Section \ref{section scax}, instead of the successive resource allocation based feasibility approach,  SCA can also be an alternative for solving the feasibility problem of the BB algorithm. Table \ref{table2} shows the running time required by the SCA and successive resource allocation (SRA) based BB algorithms, where the choices of the system parameters are the same as those in Fig. \ref{fig6}. As can be seen from the table, the use of SCA to solve the BB feasibility problem causes significantly higher computational complexity than that of SRA. For example, with large $M$, the use of SCA requires around $6$ times more running time than SRA, an observation consistent to Remark 7. The use of SCA to solve the BB feasibility problem does not only cause more computational complexity, but also introduce performance losses, as shown in Table \ref{table2}\footnote{ It is worth to point out that Table \ref{table2} is obtained with $200$ iterations only, due to the extremely large running time of the SCA-based BB algorithm. }. In particular, the table shows that  the SCA-based BB algorithm cannot even outperform SCA. This observation might be due to the fact that the maximal number of iterations for the BB algorithm is set as $1000$, and   the SCA-based BB algorithm may not always converge before $1000$ iterations.  

\begin{table}[t]\vspace{-0em}
  \centering
  \caption{Performance of    the   SCA Based        BB Algorithm   }
  \begin{tabular}{|c|c|c|c|c|c|}
\hline
&    $M=1$& $M=2$  & $M=3$ &$M=4$ &$M=5$   \\
    \hline
    OMA  &$0.5837$&  $	1.4140$  &	$1.5670$  &	$2.4036$  &	$3.2109$      \\
    \hline
    SCA & $0.5837$  &$	1.1920$  &	$1.1594$  	&$1.7114$  &$	2.0892$    \\
    \hline
    SCA-based BB  &$0.5872$  &$ 1.1952$  	&$1.2270$  &	$1.7479 $  &	$2.5876$   
     \\\hline
  \end{tabular}\vspace{1em}\label{table3}\vspace{-1em}
\end{table}  

\section{Conclusions}\label{section 6}
In this paper, a new BackCom assisted H-NOMA uplink transmission scheme which is  a general multiple access framework of conventional OMA and pure NOMA has been developed.   Resource allocation for H-NOMA uplink transmission has also been  considered, where an overall power minimization problem was formulated. Insightful understandings for the key features of BackCom assisted H-NOMA and its difference from conventional H-NOMA have been illustrated by developing analytical results for the two-user special case. For the general multi-user scenario, two algorithms, one based on the BB principle and the other based on SCA, have been developed to realize different tradeoffs between the system performance and complexity. The numerical results have also been   provided to verify the accuracy of the developed analytical results and demonstrate the performance gain of H-NOMA over OMA. An important direction for future research is to use other performance metrics, such as energy efficiency, for  resource allocation  in BackCom assisted H-NOMA networks. 
 
\appendices 
\section{Proof of Lemma \ref{lemma1}}\label{proof1}
By explicitly showing the constraints of $P_i\geq 0$, $i\in\{0,1,2\}$, problem \eqref{pb:3} can be first rewritten 
 as follows:
\begin{problem}\label{pb:x0} 
  \begin{alignat}{2}
\underset{ }{\rm{min}} &\quad   P_1+P_2 \label{0tst:x0}
\\   s.t. &\quad   \eqref{3tst:1},   \eqref{3tst:2}, \eqref{3tst:3},  P_i\geq 0, i\in\{0,1,2\},
  \end{alignat}
\end{problem} 
which is a convex optimization problem. Its   Lagrangian  is given by 
\begin{align}\nonumber
L= &P_1+P_2 +\lambda_1\left(
 R- \log\left(
1+\gamma_0  P_0 
\right)-\log\left(
1+ \gamma_2P_2
\right)
 \right)\\ &+\lambda_2\left(
 \gamma_0 P_0+\epsilon-\gamma_1 P_1
 \right)+\lambda_3(P_0-P_1)\\\nonumber &-\lambda_4  P_1-\lambda_5  P_2-\lambda_6  P_0 ,
\end{align}
where $\lambda_i$'s, $1\leq i \leq 6$, denote the Lagrange  multipliers. 
Recall that for a convex optimization problem, a solution is optimal if and only if it satisfies the KKT conditions.  Therefore, in order to show that the OMA solution is not  optimal to problem \eqref{pb:x0}, it is sufficient to show that the OMA solution  cannot satisfy the corresponding  KKT conditions. It is straightforward to show that the OMA solution satisfies all the primal feasibility conditions. The stationarity and complementary slackness conditions of  
 \eqref{pb:x0} are given by 
\begin{align}
\left\{\begin{array}{l} 1-\gamma_1\lambda_2-\lambda_3-\lambda_4=0 ,
1-\frac{\lambda_1 \gamma_2}{1+\gamma_2P_2}-\lambda_5=0,\\  
-\frac{\lambda_1 \gamma_0}{1+\gamma_0P_0}+\epsilon \lambda_2 \gamma_0+\lambda_3-\lambda_6=0,
\\  
\lambda_1\left(R- \log\left(
1+\gamma_0  P_0 
\right)-\log\left(
1+ \gamma_2P_2
\right)\right)=0,
\\    
 \lambda_2\left( \epsilon \gamma_0 P_0+\epsilon-\gamma_1 P_1\right)= 0, \lambda_3\left(  P_0- P_1\right)=0,
 \\    \lambda_4P_1=0,\lambda_5P_2=0,\lambda_6P_0=0.
 \end{array}\right.
\end{align}
By using the fact that  for the OMA solution,  $ P_m=\frac{\epsilon}{\gamma_m}$, $m=\{1,2\}$, and  $P_0=0$, with some straightforward algebraic manipulations, the closed-form expressions of the Lagrange multipliers can be obtained   as follows:
\begin{align}
& \lambda_1   =\frac{1+  \epsilon}{\gamma_2}, \lambda_2 =\frac{1}{\gamma_1}, \lambda_3   =0, \lambda_4 =0,\lambda_5 =0  \\\nonumber& 
\lambda_6= -\lambda_1 \gamma_0 +\epsilon \lambda_2 \gamma_0 =\frac{\epsilon   \gamma_0 }{\gamma_1}-  \frac{\gamma_0+\gamma_0  \epsilon}{\gamma_2}  .
\end{align}
If the OMA solution is not optimal, at least one of these Lagrange multipliers should be negative. In particular,  $\lambda_6$ can be expressed as follows:
\begin{align}
\lambda_6 &=\gamma_0  \frac{\epsilon     \gamma_2 -  \gamma_1-   \gamma_1\epsilon}{\gamma_1\gamma_2}     
    \\\nonumber &  =  |g_{21}|^2 \frac{\epsilon \left(   |h_2|^2- |h_1|^2\right) -   |h_1|^2}{|h_1|^2 }  < 0,
\end{align}
where     the fact that $ |h_2|^2< |h_1|^2$ is used. Therefore,   the OMA solution cannot be an optimal solution, and the proof of the lemma is complete. 

\section{Proof of Lemma \ref{lemma2}} \label{proof2}
Recall that the overall power minimization problem for conventional H-NOMA can be expressed as follows: 
 \begin{problem}\label{pb:x3} 
  \begin{alignat}{2}
\underset{ P_i\geq0 }{\rm{min}} &\quad   P_1+P_{0}+P_{2} \label{1tst:x3}
\\ s.t.  \label{2tst:x3}&\quad      \log\left(
1+|h_2|^2 P_{0}
\right)+ \log\left(
1+ |h_2|^2P_{2}
\right)\geq R
\\&\quad  \label{3tst:x3}
   \log\left(
1+\frac{|h_1|^2 P_1 }{ |h_2|^2 P_{0}+1}
\right)\geq R,
  \end{alignat}
\end{problem} 
where $P_0$ is shown in the objective function since in conventional H-NOMA, $P_0$ is ${\rm U}_2$'s transmit power during the first time slot, and    the constraint $P_0\leq P_1$ in problem \eqref{pb:3}  is no longer  needed for conventional H-NOMA.

The proof can be completed by first obtaining  the closed-form expression of the pure NOMA solution and then proving that it cannot be an optimal solution for conventional H-NOMA. 
 
In particular, for conventional H-NOMA, the pure NOMA solution can be obtained by using the fact that $P_2=0$ for pure NOMA and simplifying problem \eqref{pb:x3} to the following form: 
  \begin{problem}\label{pb:x2} 
  \begin{alignat}{2}
\underset{ P_i\geq0 }{\rm{min}} &\quad   P_1+P_{0}  \label{1tst:x2}
\\ s.t. &\quad      \log\left(
1+|h_2|^2 P_{0}
\right) \geq R
\\&\quad 
   \log\left(
1+\frac{|h_1|^2 P_1 }{ |h_2|^2 P_{0}+1}
\right)\geq R.
  \end{alignat}
\end{problem}
  With some straightforward algebraic manipulations,  the pure NOMA solution can be obtained as follows:  
\begin{align}\label{pure noma}
  P_1  =&  \frac{(\epsilon+1)\epsilon}{  |h_1|^2}, \quad 
   P_{0}=  \frac{ \epsilon   }{ |h_2|^2 }  .
\end{align}

Recall that problem \eqref{pb:x3} is convex, and hence  the proof can be complete by showing that the pure NOMA solution shown in \eqref{pure noma} does not satisfy the KKT conditions of problem \eqref{pb:x3}. Note that the Lagrangian of problem \eqref{pb:x3} is given by 
\begin{align}
L=& P_1+P_{0}+P_{2} +\lambda_1 (R-\log\left(
1+|h_2|^2 P_{0}
\right)\\\nonumber &- \log\left(
1+ |h_2|^2P_{2}
\right) ) +\lambda_2\left(\epsilon |h_2|^2 P_{0}+\epsilon -   |h_1|^2 P_1  
\right)\\\nonumber & -\lambda_3 P_1-\lambda_4 P_{0}-\lambda_5P_{2},
\end{align}
where   $\lambda_i$'s, $1\leq i \leq 5$, denote the Lagrange  multipliers. 
The use of the complementary slackness and stationarity conditions lead to the following observations: 
\begin{align}
\left\{\begin{array}{l} 
1-\lambda_2 |h_1|^2 -\lambda_3=0\\
1 - \frac{\lambda_1|h_2|^2}{1+|h_2|^2 P_{0}}+\lambda_2  \epsilon |h_2|^2-\lambda_4=0\\
1 -\frac{\lambda_1 |h_2|^2}{1+ |h_2|^2P_{2}}-\lambda_5=0\\
\lambda_1\left(R-\log\left(
1+|h_2|^2 P_{0}
\right)+ \log\left(
1+ |h_2|^2P_{2}
\right)\right)=0\\
\lambda_2\left(\epsilon |h_2|^2 P_{0}+\epsilon -   |h_1|^2 P_1  
\right)=0\\
\lambda_3 P_1=0,  \lambda_4 P_{0}=0,  \lambda_5P_{2}=0
 \end{array}\right..
\end{align}
By using the fact that for pure NOMA,  $P_1  =  \frac{(\epsilon+1)\epsilon}{  |h_1|^2}$,  $   P_{0}=  \frac{ \epsilon   }{ |h_2|^2 }  $ and $P_{2}=0$, the Lagrange multipliers can be expressed in the following expressions: 
\begin{align}
\left\{\begin{array}{l} 
 \lambda_2   =\frac{1}{|h_1|^2}\\
\lambda_1=\frac{1+  \epsilon    }{|h_2|^2}+  \frac{1+  \epsilon    }{|h_2|^2} \epsilon \frac{|h_2|^2}{|h_1|^2}  \\ 
\lambda_3  =0,  \lambda_4  =0,  \lambda_5=1 - \lambda_1 |h_2|^2 
 \end{array}\right..
\end{align}
Therefore,  whether the pure NOMA solution is optimal depends on   the value of $\lambda_5$ which can be expressed as follows:
\begin{align}
  \lambda_5=&1 -  |h_2|^2 \left(\frac{1+  \epsilon    }{|h_2|^2}+  \frac{1+  \epsilon    }{|h_2|^2} \epsilon \frac{|h_2|^2}{|h_1|^2}
  \right)\\\nonumber
  =& -  \epsilon   -   (1+  \epsilon ) \epsilon \frac{|h_2|^2}{|h_1|^2}< 0.
\end{align}
Therefore, for conventional H-NOMA, pure NOMA cannot be the optimal transmission strategy, and the proof of the lemma is complete. 

\section{Proof of Lemma \ref{lemma3}}\label{proof3} 
Assume  that pure NOMA is used, i.e.,   $P_2=0$, and problem \eqref{pb:3} can be simplified as follows: 
 \begin{problem}\label{pb:0} 
  \begin{alignat}{2}
\underset{P_{i}\geq0  }{\rm{min}} &\quad   P_1  \label{0tst:0}
\\  \label{0tst:1}s.t. &\quad       R- \log\left(
1+\gamma_0  P_0 
\right) \leq 0
\\&\quad  \label{0tst:2}
  \epsilon \gamma_0 P_0+\epsilon-\gamma_1 P_1\leq 0
, \quad P_0\leq P_1, 
  \end{alignat}
\end{problem} 
which can be further simplified as follows:
 \begin{problem}\label{pb:0} 
  \begin{alignat}{2}
\underset{P_{i}\geq0  }{\rm{min}} &\quad   P_1  \label{0tst:0}
\\  \label{0tst:1}s.t. &\quad        
   \frac{\epsilon  }{\gamma_0}\leq P_0 
 \leq P_1, \quad   
  P_1 \geq  \frac{\epsilon \gamma_0 P_0+\epsilon}{\gamma_1}. 
  \end{alignat}
\end{problem} 
With some straightforward algebraic manipulations, the pure NOMA solution can be expressed as follows:
\begin{align}
P_1 = \max\left\{\frac{\epsilon}{\gamma_0},\frac{\epsilon(1+\epsilon)}{\gamma_1}
\right\}, \quad
 P_0 = \frac{\epsilon}{\gamma_0}. 
\end{align}
%

Recall that the use of the complementary slackness and stationarity conditions of problem \eqref{pb:3} lead to the following conclusions: 
\begin{align}\label{kktcccc}
\left\{\begin{array}{l} 1-\gamma_1\lambda_2-\lambda_3-\lambda_4=0 ,
1-\frac{\lambda_1 \gamma_2}{1+\gamma_2P_2}-\lambda_5=0,\\  
-\frac{\lambda_1 \gamma_0}{1+\gamma_0P_0}+\epsilon \lambda_2 \gamma_0+\lambda_3-\lambda_6=0,
\\  
\lambda_1\left(R- \log\left(
1+\gamma_0  P_0 
\right)-\log\left(
1+ \gamma_2P_2
\right)\right)=0,
\\    
 \lambda_2\left( \epsilon \gamma_0 P_0+\epsilon-\gamma_1 P_1\right)= 0, \lambda_3\left(  P_0- P_1\right)=0,
 \\    \lambda_4P_1=0,\lambda_5P_2=0,\lambda_6P_0=0.
 \end{array}\right.
\end{align}

First, the case that   $P_0=\frac{\epsilon}{\gamma_0}$ and $P_1=\frac{\epsilon(1+\epsilon)}{\gamma_1}$ is focused on, which requires the following assumption:
\begin{align}
\frac{\epsilon}{\gamma_0}\leq \frac{\epsilon(1+\epsilon)}{\gamma_1} \Rightarrow \frac{ \gamma_1}{\gamma_0}\leq  (1+\epsilon). 
\end{align}
It is straightforward to verify that $P_0=\frac{\epsilon}{\gamma_0}$ and $P_1=\frac{\epsilon(1+\epsilon)}{\gamma_1}$ ensure that $R- \log\left(
1+\gamma_0  P_0 
\right)-\log\left(
1+ \gamma_2P_2
\right)=0$ and $\epsilon \gamma_0 P_0+\epsilon-\gamma_1 P_1=0$, which implies  that $\lambda_1\neq 0$ and  $\lambda_2\neq 0$. 
 By using the facts that $P_0\neq P_1\neq 0$ and  $P_2=0$ for the pure NOMA solution,  \eqref{kktcccc} can be simplified as follows: 
\begin{align}
\left\{\begin{array}{l} 1-\gamma_1\lambda_2  =0 ,
1- \lambda_1 \gamma_2 -\lambda_5=0,\\  
-\frac{\lambda_1 \gamma_0}{1+\epsilon }+\epsilon \lambda_2 \gamma_0  =0, 
 \\  \lambda_3=0,  \lambda_4 =0, \lambda_6 =0
 \end{array}\right.,
\end{align}
which leads to the following closed-form expressions of the Lagrange multipliers: 
\begin{align}
\left\{\begin{array}{l} 
\lambda_1 =\frac{\epsilon (1+\epsilon)}{\gamma_1}  , \lambda_2  =\frac{1}{\gamma_1} ,
\lambda_5=1- \frac{\epsilon (1+\epsilon)}{\gamma_1} \gamma_2, 
 \\  \lambda_3=0,  \lambda_4 =0, \lambda_6 =0.
 \end{array}\right.
\end{align}
In order to guarantee    $\lambda_5\geq 0$,  the following is required:   $1- \frac{\epsilon (1+\epsilon)}{\gamma_1} \gamma_2\geq0$ or equivalently 
$  \frac{\gamma_2}{\gamma_1} \leq \frac{1}{\epsilon (1+\epsilon)}$. Therefore,  the pure NOMA solution,   $P_0=\frac{\epsilon}{\gamma_0}$ and $P_1=\frac{\epsilon(1+\epsilon)}{\gamma_1}$, is optimal, if the following two conditions hold
\begin{align}\label{conditions1}
 \frac{\gamma_2}{\gamma_1} \leq \frac{1}{\epsilon (1+\epsilon)},   \quad \frac{ \gamma_1}{\gamma_0}\leq  (1+\epsilon). 
\end{align}

Second, the    case that   $P_0=P_1=\frac{\epsilon}{\gamma_0}$  is considered, which requires the following assumption:
\begin{align}\label{xdfere}
\frac{\epsilon}{\gamma_0}\geq \frac{\epsilon(1+\epsilon)}{\gamma_1} \Rightarrow \frac{ \gamma_1}{\gamma_0}\geq  (1+\epsilon). 
\end{align}
Note that $P_0=P_1=\frac{\epsilon}{\gamma_0}$  ensures that $R- \log\left(
1+\gamma_0  P_0 
\right)-\log\left(
1+ \gamma_2P_2
\right)=0$ and $\epsilon \gamma_0 P_0+\epsilon-\gamma_1 P_1\neq0$, which implies  that $\lambda_1\neq 0$ and  $\lambda_2= 0$. 
By using the facts that $P_0= P_1\neq 0$ and  $P_2=0$ for the pure NOMA solution,  \eqref{kktcccc} can be simplified as follows: 
\begin{align}
\left\{\begin{array}{l}  \lambda_3=1 ,
1- \lambda_1 \gamma_2 -\lambda_5=0,\\  
-\frac{\lambda_1 \gamma_0}{1+\epsilon} +1 =0,
 \\      \lambda_2 =0,  \lambda_4 =0, \lambda_6 =0
 \end{array}\right.,
\end{align}
which leads to the following closed-form expressions of the Lagrange multipliers: 
\begin{align}
\left\{\begin{array}{l}  \lambda_3=1 ,
\lambda_5=1- \frac{1+\epsilon} { \gamma_0} \gamma_2 ,\\  
\lambda_1 =\frac{1+\epsilon} { \gamma_0},
 \\      \lambda_2 =0,  \lambda_4 =0, \lambda_6 =0
 \end{array}\right.,
\end{align}
To ensure $\lambda_5\geq0$,  $  \frac{\gamma_2} { \gamma_0}\leq \frac{1}{1+\epsilon} $ is required.   It is worth to point out that the condition in \eqref{xdfere} ensures that $\epsilon \gamma_0 P_0+\epsilon-\gamma_1 P_1\leq0$. Therefore,  the pure NOMA solution,   $P_0=P_1=\frac{\epsilon}{\gamma_0}$,  is optimal, if the following two conditions hold
\begin{align}\label{conditions2}
 \frac{\gamma_2} { \gamma_0}\leq \frac{1}{1+\epsilon},   \quad \frac{ \gamma_1}{\gamma_0}\geq  (1+\epsilon). 
\end{align}
Combining \eqref{conditions1} and \eqref{conditions2}, the proof of the lemma is complete. 

\section{Proof of Lemma \ref{lemma4}}\label{proof4}
In this proof, both $P_2$ and $P_0$ are assumed to be non-zero, since both pure NOMA and OMA solutions have already been discussed.  Furthermore, to facilitate performance analysis, the constraints, $P_i\geq 0$, $i\in \{1,2\}$, are omitted, and the corresponding    Lagrangian of the formulated optimization problem  is given by 
\begin{align}\nonumber
L= &P_1+P_2 +\lambda_1\left(
 R- \log\left(
1+\gamma_0  P_0 
\right)-\log\left(
1+ \gamma_2P_2
\right)
 \right)\\ &+\lambda_2\left(\epsilon
 \gamma_0 P_0+\epsilon-\gamma_1 P_1
 \right)+\lambda_3(P_0-P_1)),
\end{align}
where $\lambda_i$'s, $1\leq i \leq 3$, denote the Lagrange   multipliers. 
The use of the stationarity   of the KKT conditions leads to the following:
\begin{align}
1-\gamma_1\lambda_2-\lambda_3&=0,\\\nonumber
1-\frac{\lambda_1 \gamma_2}{1+\gamma_2P_2}&=0,\\\nonumber
-\frac{\lambda_1 \gamma_0}{1+\gamma_0P_0}+\epsilon \lambda_2 \gamma_0+\lambda_3&=0,
\end{align}
which yield   the following conclusions about the transmit power and Lagrange multipliers:   
\begin{align}\label{kkt1}
 P_2&=\lambda_1 -\frac{1}{\gamma_2},\\\nonumber
 P_0&= \frac{\lambda_1 }{\epsilon  \gamma_0\lambda_2 +1-\gamma_1\lambda_2} -\frac{1}{\gamma_0},\\\nonumber
  \lambda_3&=1-\gamma_1\lambda_2.
\end{align}
In order to obtain the insightful understandings about the optimal solution, it is ideal to  remove the Lagrange multipliers from the  transmit power expressions. 

We first note that $\lambda_1\neq 0$, because $\lambda_1= 0$ leads to $P_2=-\frac{1}{\gamma_2}$.  
By using the complementary slackness of the KKT conditions, the assumption of $\lambda_1\neq 0$ means that  \eqref{3tst:1} becomes an equality constraint, i.e., 
\begin{align}
R- \log\left(
1+\gamma_0  P_0 
\right)-\log\left(
1+ \gamma_2P_2
\right)=0. \label{kkt2}
\end{align}
With some straightforward algebraic manipulations, the combination of \eqref{kkt1} and \eqref{kkt2} leads to the following conclusion: 
\begin{align}  \label{station}
  \frac{\gamma_0\gamma_2\lambda_1^2 }{\epsilon \gamma_0\lambda_2 +1-\gamma_1\lambda_2}   =e^R.
\end{align}

\subsection{For the case of   $\lambda_3=0$} By using  \eqref{kkt1} and the assumption of  $\lambda_3=0$, $\lambda_2$ can be expressed as follows: $\lambda_2=\frac{1}{\gamma_1}$. Because $\lambda_2=\frac{1}{\gamma_1}\neq 0$,   \eqref{3tst:3} becomes an equality constraint:  
\begin{align}
&\epsilon \gamma_0 P_0+\epsilon-\gamma_1 P_1= 0 ,
\end{align}
which leads to the following expression of $P_1$:
\begin{align} 
 P_1&= \frac{\epsilon \gamma_0 P_0+\epsilon}{ \gamma_1}. 
\end{align}
In addition, by using \eqref{station} and the fact that $\lambda_2=\frac{1}{\gamma_1}$, $\lambda_1$ can be obtained as follows:
\begin{align}  
  \lambda_1      =\sqrt{\frac{\epsilon e^R}{ \gamma_2\gamma_1}}. 
\end{align}
Therefore, the transmit power solutions can be shown in the following closed-form expressions: 
\begin{align}
 P_0&= \frac{\gamma_1  }{\epsilon \gamma_0}\sqrt{\frac{\epsilon e^R}{ \gamma_2\gamma_1}} -\frac{1}{\gamma_0},
  P_1=   \sqrt{\frac{\epsilon e^R}{ \gamma_2\gamma_1}} ,
 \\\nonumber
 P_2&=\sqrt{\frac{\epsilon e^R}{ \gamma_2\gamma_1}} -\frac{1}{\gamma_2}. 
\end{align}

\subsection{For the case of $\lambda_3\neq 0$} 
Because  $\lambda_3\neq 0$,  \eqref{3tst:3} becomes an equality constraint, i.e.,   $P_0=P_1$, which means that the transmit power solutions are given by
 \begin{align}\label{x1}
 &P_1=P_0= \frac{\lambda_1 }{\epsilon \gamma_0\lambda_2 +1-\gamma_1\lambda_2} -\frac{1}{\gamma_0},\\\nonumber &P_2=\lambda_1 -\frac{1}{\gamma_2}.
\end{align}

\subsubsection{For the case of   $\lambda_2=0$}  the  expressions of $P_1$ and $P_0$ can be simplified as follows:  $ P_1=P_0=\lambda_1 -\frac{1}{\gamma_0}$. In addition, the fact that $\lambda_2=0$ simplifies  the complementary slackness  condition in \eqref{station}, and      can be used  to yield the following expression of $\lambda_1$:
\begin{align} 
  \lambda_1=\sqrt{\frac{e^R}{ \gamma_0\gamma_2}}.
\end{align}
So the transmit power solutions in this case are given by 
\begin{align}
 P_1=P_0&= \sqrt{\frac{e^R}{ \gamma_0\gamma_2}} -\frac{1}{\gamma_0}, P_2=\sqrt{\frac{e^R}{ \gamma_0\gamma_2}} -\frac{1}{\gamma_2}.
\end{align}
We note that evaluating which H-NOMA solution is optimal requires the expressions of all the Lagrange multipliers. It is worth to point out that $  \lambda_3=1-\gamma_1\lambda_2=1$ for this case. 

\subsubsection{For the case of $\lambda_2\neq 0$}   \eqref{3tst:2} becomes an equality constraint, and hence the following equality holds: 
\begin{align}\label{x2}
\epsilon \gamma_0 P_0+\epsilon-\gamma_1 P_1= 0,
\end{align}
which leads to the following observation:
\begin{align}    \label{kkt5}
 P_0= \frac{\epsilon}{\gamma_1-\epsilon \gamma_0}   ,
\end{align}
where the fact that $P_0=P_1$ from \eqref{x1} is used. 

An important observation from \eqref{x1} is that the expression of $P_0$ contains a term of $ \frac{ \lambda_1 }{\epsilon \gamma_0\lambda_2 +1-\gamma_1\lambda_2}$, and this term  can be evaluated by rewriting  \eqref{station} as follows: 
\begin{align}  \label{station2}
  \frac{ \lambda_1 }{\epsilon \gamma_0\lambda_2 +1-\gamma_1\lambda_2}   =\frac{e^R}{\gamma_0\gamma_2\lambda_1}.
\end{align}

Combining \eqref{x1}, \eqref{kkt5} and \eqref{station2}, the following equality can be obtained:
\begin{align}  \label{station4}
 \frac{\epsilon}{\gamma_1-\epsilon \gamma_0}  = \frac{e^R}{\gamma_0\gamma_2\lambda_1} -\frac{1}{\gamma_0},
\end{align}
which means that $\lambda_1$ can be obtained as follows:
\begin{align}  \label{station5}
\lambda_1= \frac{e^R}{ \frac{\epsilon \gamma_0\gamma_2}{\gamma_1-\epsilon \gamma_0} +  \gamma_2} = \frac{e^R(\gamma_1-\epsilon \gamma_0)}{\gamma_1\gamma_2} .
\end{align}

Therefore, the transmit power solutions for this case are given by
\begin{align}
&P_1=P_0=\frac{\epsilon}{\gamma_1-\epsilon \gamma_0}
 ,  P_2=\frac{e^R(\gamma_1-\epsilon \gamma_0)}{\gamma_1\gamma_2} -\frac{1}{\gamma_2}.
\end{align}
The closed-form expressions of the Lagrange  multipliers are important to evaluate which form of H-NOMA is optimal. For this case, the Lagrange  multipliers can be obtained as follows:
\begin{align}  
 \lambda_2    =&\frac{\gamma_0\gamma_2\lambda_1^2}{e^R(\epsilon \gamma_0-\gamma_1)}-\frac{1}{\epsilon \gamma_0-\gamma_1}\\\nonumber 
 =&  \frac{\gamma_0e^{R}(\epsilon \gamma_0-\gamma_1)}{\gamma_1^2\gamma_2}  -\frac{1}{\epsilon \gamma_0-\gamma_1},
\end{align}
and
\begin{align}  
  \lambda_3   =&1-\gamma_1\lambda_2=1- \frac{\gamma_0e^{R}(\epsilon \gamma_0-\gamma_1)}{\gamma_1\gamma_2}  +\frac{\gamma_1}{\epsilon \gamma_0-\gamma_1}.  
\end{align}
The proof of the lemma is complete. 
\bibliographystyle{IEEEtran}
\bibliography{IEEEfull,trasfer}
  \end{document}